\documentclass[a4paper,UKenglish]{lipics-v2021}
\usepackage[utf8]{inputenc}
\usepackage{amsmath}
\usepackage{amsfonts}
\usepackage{amsthm}
\usepackage{graphicx}
\usepackage{wrapfig}
\usepackage{makecell}
\usepackage{xspace}
\usepackage{caption}
\usepackage{imakeidx}
\usepackage{thmtools}

\usepackage[dvipsnames,table]{xcolor}

\graphicspath{{./figures/}}

\newcommand{\etal}{\textit{et al.}\xspace}
\newcommand{\R}{\ensuremath{\mathbb R}}
\newcommand{\eps}{\ensuremath{\varepsilon}}
\newcommand{\N}{\mathbb{N}\xspace}

\renewcommand{\d}{\ensuremath{d_o }}
\newcommand{\frechet}{Fr\'echet\xspace}
\newcommand{\Az}{\ensuremath{A^{ \{0 \} } }}
\newcommand{\Ao}{\ensuremath{A^{ \{1 \} } }}
\newcommand{\Bo}{\ensuremath{B^{ \{1 \} } }}
\newcommand{\Bz}{\ensuremath{B^{ \{0 \} } }}
\newcommand{\FD}{\ensuremath{{D_{\mathcal{F}}} }}
\newcommand{\WFD}{\ensuremath{{D^w_{\mathcal{F}}} }}
\newcommand{\dist}{\mathcal{D}(G)}

\newcommand{\M}{M_\rho^{\kappa } }
\newcommand{\Meps}{M_\rho^{\beta}}
\newcommand{\pieps}{\pi}

\newcommand{\farvertex}{\cellcolor{gray!05}}
\newcommand{\nearvertex}{\cellcolor{yellow!20}}
\newcommand{\bluevertex}{\cellcolor{blue!20}}
\newcommand{\redvertex}{\cellcolor{red!20}}

\bibliographystyle{plainurl}
\nolinenumbers
\title{On the Discrete \frechet Distance in a Graph\footnote{Supported by the Hausdorff Center for Mathematics (DFG grant number EXC 2047) and Independent Research Fund Denmark grant 2020-2023 (9131-00044B) ``Dynamic Network Analysis'' } } 
\titlerunning{On the Discrete \frechet Distance in a Graph}

\author{Anne Driemel}{Institut f\"{u}r Informatik, Universit\"{a}t Bonn, Germany}{driemel@cs.uni-bonn.de}{}{}

\author{Ivor van der Hoog}{Department of Applied Mathematics and Computer Science, Technical University of Denmark}{vanderhoog@gmail.com}{}{}

\author{Eva Rotenberg}{ Department of Applied Mathematics and Computer Science, Technical University of Denmark}{erot@dtu.dk}{0000-0001-5853-7909}{}

\authorrunning{A. Driemel, I. van der Hoog, E. Rotenberg.}

\Copyright{Anne Driemel, Ivor van der Hoog, Eva Rotenberg.} 
\ccsdesc[500]{ Theory of computation ~ Design and analysis of algorithms}

\keywords{\frechet, graphs, planar, complexity analysis}

\category{}

\relatedversion{}

\supplement{}

\acknowledgements{
A preliminary version of the results in Sections~\ref{sec:twoapproximation} and \ref{appx:walks} appeared in the master thesis of David G\"{o}ckede. We thank David G\"{o}ckede and Petra Mutzel for useful discussions.}



\begin{document}

\maketitle
\hideLIPIcs 

\begin{abstract}
The \frechet distance is a well-studied similarity measure between curves that is widely used throughout computer science. Motivated by applications where curves stem from paths and walks on an underlying graph (such as a road network), we define and study the \frechet distance for paths and walks on graphs. 
When provided with a distance oracle of $G$ with $O(1)$ query time, the classical quadratic-time dynamic program can compute the \frechet distance between two walks $P$ and $Q$ in a graph $G$ in $O(|P| \cdot |Q|)$ time.
We show that there are situations where the graph structure helps with computing \frechet distance: when the graph $G$ is planar, we apply existing (approximate) distance oracles to compute a $(1+\eps)$-approximation of the \frechet distance between  any shortest path $P$ and any walk $Q$ in $O(|G| \log |G| / \sqrt{\eps} + |P| + \frac{|Q|}{\eps } )$ time. 
We generalise this result to near-shortest paths, i.e. $\kappa$-straight paths, as we show how to compute a $(1+\eps)$-approximation between a $\kappa$-straight path $P$ and any walk $Q$ in $O(|G| \log |G| / \sqrt{\eps} + |P| + \frac{\kappa|Q|}{\eps } )$ time. 
Our algorithmic results hold for both the strong and the weak discrete \frechet distance over the shortest path metric in $G$.

Finally, we show that additional assumptions on the input, such as our assumption on path straightness, are indeed necessary to obtain truly subquadratic running time.
We provide a conditional lower bound showing that the \frechet distance, or even its $1.01$-approximation, between arbitrary \emph{paths} in a weighted planar graph cannot be computed in $O((|P|\cdot|Q|)^{1-\delta})$ time for any $\delta > 0$ unless the Orthogonal Vector Hypothesis fails. For walks, this lower bound  holds even when $G$ is planar, unit-weight and has $O(1)$ vertices.
\end{abstract}

\section{Introduction}

The \frechet distance is a popular metric for measuring
the similarity between (polygonal) curves. 
The \frechet distance is often intuitively defined through the following metaphor: suppose that we have two curves that are traversed by a person and their dog. Over all possible traversals by both the person and the dog, what is the minimum length of their connecting leash? 
This distance measure is similar to the Hausdorff distance,
which is defined for sets, except that it takes the ordering of points along the curve into account.
The \frechet distance
has many applications; in particular in the analysis and
visualization of movement data~\cite{ buchin2017clustering, buchin2020group,  konzack2017visual, xie2017distributed}.
It is a versatile distance measure that can be used for a variety of objects, such as handwriting \cite{sriraghavendra2007frechet},  coastlines~\cite{mascret2006coastline}, outlines of geometric shapes in geographic information systems~\cite{devogele2002new}, trajectories of moving objects, such as vehicles, animals or sports players~\cite{acmsurvey20, su2020survey, brakatsoulas2005map, buchin2020group}, air traffic~\cite{ bombelli2017strategic} and also protein structures~\cite{jiang2008protein}. 
There are many variants of the \frechet distance, some of which we also discuss further below. 
The two most-studied variants are the \emph{continuous} and \emph{discrete} \frechet distance (based on whether the entities traverse a polygonal curve in a continuous manner or vertex-by-vertex). 

Alt and Godau~\cite{alt1995computing} were the first to study the \frechet distance
from a computational perspective. 
They studied how to compute the continuous \frechet distance between two polygonal curves of $n$ and $m$ vertices each in $O(mn \log (n + m))$ time. 
Recently, this running time was improved by Buchin~\etal \cite{buchin2017four} to $O(n^2 \sqrt{\log n} (\log \log n)^{3/2} )$ on a real-valued pointer machine and $O(n^2 \log \log n)$ on a word RAM with word size $\Omega(\log n)$. Eiter and Manila~\cite{eitermannila94} showed how to compute the discrete \frechet distance between two polygonal curves in $O(nm)$ time, which was later improved to $O(nm (\log \log  nm ) / \log nm )$ by Buchin~\etal~\cite{buchin2017four}.

\subparagraph{Conditional lower bounds for the \frechet distance.}
The above (near-) quadratic upper bound algorithms are accompanied by a series of conditional lower bounds for computing the \frechet distance or a constant factor approximation. All these results assume the Orthogonal Vector Hypothesis (OVH) or, by extension, the strong exponential time hypothesis (SETH)~\cite{Williams05}. 
Bringmann~\cite{bringmann2014walking} shows that there is no $O(n^{2-\delta})$ algorithm, for any $\delta > 0$, for computing the (discrete or continuous) \frechet distance between two polygonal curves of $n$ vertices each. The statement also holds for approximation algorithms with small constant approximation factor.
Bringmann's original proof uses self-intersecting curves in the plane. 
Later, Bringmann and Mulzer~\cite{bringmann2016approximability} showed the same conditional lower bound for intersecting curves in $\mathbb{R}^1$. 
Bringmann~\cite{bringmann2014walking} also showed the following conditional lower bound tailored to the unbalanced setting where the two input curves have different complexities: given two polygonal curves of $n$ and $m$ vertices each, there is no $O((nm)^{1-\delta})$ time algorithm for computing the Fr\'echet distance.
Recently Buchin, Ophelders and Speckmann~\cite{buchin2019seth} showed that (assuming OVH) there can be no $O((nm)^{1-\delta})$ time algorithm that computes anything better than a $3$-approximation of the \frechet distance for pairwise disjoint planar curves in $\mathbb{R}^2$ and intersecting curves in $\mathbb{R}^1$.

\subparagraph{Avoiding lower bounds.}
These lower bounds can be circumvented whenever the input curves come from well-behaved classes of curves, such as $c$-packed curves~\cite{driemel2012approximating, bringmann2017improved}, $\phi$-low density curves~\cite{driemel2012approximating}, and $\kappa$-straight curves~\cite{alt2004comparison,aronov2006frechet}, and in special cases when the edges of the input curves are long~\cite{gudmundsson2019fast}. 
Another way to avoid the quadratic complexity is to allow relatively large approximation factors. Bringmann and Mulzer~\cite{bringmann2016approximability} presented an $\alpha$-approximation algorithm for the discrete \frechet distance, that runs in time  $O(n\log n + n^2/\alpha)$, for any $\alpha$ in $[1,n]$. 
This was recently improved by Chan and Rahmati~\cite{chan2018improved} to $O(n \log n + n^2/\alpha^2)$ for any $\alpha$ in $[1,n/\log n]$. 
For the continuous \frechet distance a weaker result was presented by Colombe and Fox~\cite{colombe2021approximating}. They show an $O(\alpha)$-approximation algorithm for any $\alpha$ in $[\sqrt{n}, n]$ that runs in time $O((n^3/\alpha^2) \log n)$. For general polygonal curves, without further input assumptions, the best-known approximation factors with near-linear running times are still quite high, $\alpha \approx n$ for the continuous \frechet distance and $\alpha \approx \sqrt{n}$ for the discrete case. 

\subparagraph{\frechet distance variants.}
Variants of the \frechet distance include those that model partial similarity by allowing straight-line shortcuts along a curve~\cite{driemel2013jaywalking}, or by maximizing the portions of the curves that a matched to each other within a fixed distance~\cite{buchin2009exact}. Other variants constrain the class of mappings by applying speed constraints~\cite{maheshwari2011frechet} or topological constraints~\cite{chambers2010homotopic}, or model the distance metric to the geodesics inside a simple polygon~\cite{iv2010geodesic}. Even other variants extend the class of mappings, such as the weak Fr\'echet distance, which was already studied by Alt and Godau~\cite{alt1995computing}.
Strikingly, the \frechet distance has not been studied in the context of graphs. Edge-weighted graphs with their shortest-path metric are commonly used to model discrete metric spaces~\cite{matousek2013lectures}, and the \frechet distance can be derived from the underlying distance metric (Figure~\ref{fig:distancemetric}).

In this paper, we intend to initiate a study of the computational complexity of the discrete \frechet distance between paths in a planar graph, where distances between nodes are measured by their shortest path metric in this graph.
This is a natural model when, for example, measuring the similarity of two trajectories in the same street network (Figure~\ref{fig:networkexample}).

\begin{figure}[t]
  \centering
  \includegraphics[width=.955\linewidth]{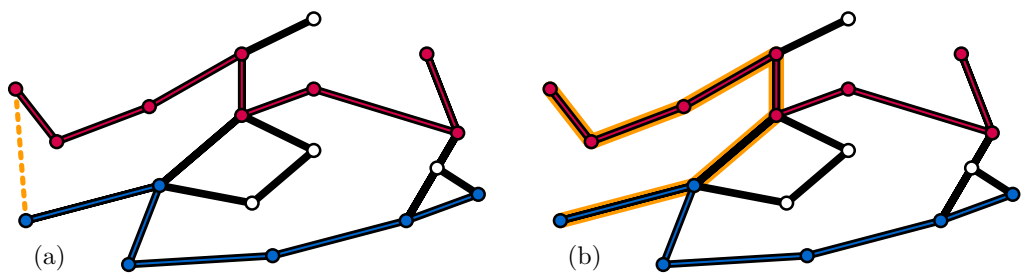}
  \caption{
The \frechet distance may be derived from the Euclidean or the shortest path metric.
  }
  \label{fig:distancemetric}
\end{figure}

\begin{figure}[b]
  \centering
  \includegraphics{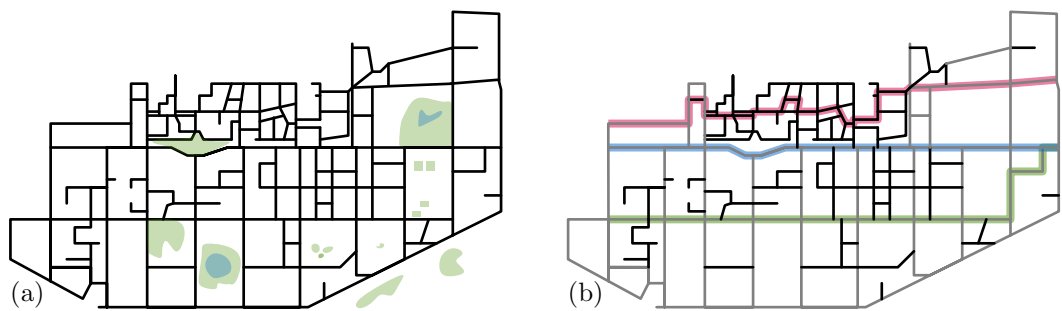}
  \caption{
  (a) A road network can be represented as a graph $G$. 
  (b) Edges in $G$ can be weighted, e.g. depending on whether traffic flows fast (grey) or slow (black).  Under the graph distance metric, the \frechet distance between blue and green may be smaller than the distance between red and blue; even though under the Euclidean metric,  the red-blue \frechet distance is smaller.
  }
  \label{fig:networkexample}
\end{figure}

\subparagraph{Contribution and organisation.}
This is the first paper that considers computing the \frechet distance in the graph domain.
Section~\ref{sec:preliminaries} contains the preliminaries where we present an overview of distance oracles and the problem statement.
Section~\ref{sec:twoapproximation} serves as an introduction to our setting and techniques. 
We assume that $P$ is a $\kappa$-straight path and that $Q$ is a walk in a planar weighted graph $G$.
We use an exact distance oracle with $O(\log^{2 + o(1)} |G|)$ query time to compute a $(\kappa + 1)$-approximation of $\FD(P, Q)$. This is the first nontrivial algorithm for computing the (approximate) \frechet distance in a planar graph.
In Section~\ref{sec:epsilonapproximation} we extend our results. We use a $(1 + \alpha)$-stretch distance oracle
to compute a $(1 + \eps)$-approximation of $\FD(P, Q)$. 
Appendix~\ref{appx:weakfrechet} contains the analogous result for the weak \frechet distance.
Finally, we show in Section~\ref{sec:hardness} a conditional lower bound for computing the \frechet distance. 
Specifically, assuming the Orthogonal Vector Hypothesis (OVH), we show that if $G$ is an integer-weighted planar graph, $P$ and $Q$ are paths in $G$ and $m = n^\gamma$ for some constant $\gamma > 0$, then for every $\delta > 0$ there can be no algorithm that computes $\FD(P, Q)$ (or a $1.01$-approximation) in $O((nm)^{1 - \delta} )$ time unless OVH fails. 
In Section~\ref{appx:walks} we extend this result to the setting where $P$ and $Q$ are walks in a planar unit-weight graph with $O(1)$ vertices.

\pagebreak
\section{Preliminaries}
\label{sec:preliminaries}
Let $G = (V, E)$ be a planar undirected weighted graph with $N$ vertices, where every edge $e_i$ has some corresponding integer weight $\omega_i$ and all weights can be expressed in a word of $\Theta(\log N)$ bits. 
For any two vertices $v_1, v_2 \in V$ their distance, denoted by $d(v_1, v_2)$, is the length of the shortest path from $v_1$ to $v_2$ in $G$.
A walk in $G$ is any sequence of vertices where every subsequent pair of vertices is connected by an edge in $E$. A path in $G$ is a walk where no vertex appears twice in the sequence.
Let $P$ be any walk in $G$, represented by an ordered set of vertices $P =  (p_1, p_2, \ldots p_n)$. We denote by $|P| = n$ the number of vertices in $P$ and by $[n]$ the set $(1, 2, \ldots, n)$.
We denote the walk $Q = (q_1, q_2, \ldots q_m)$, $|Q|$ and $[m]$ analogously.

\subparagraph{Discrete \frechet distance.}
Given two walks $P$ and $Q$ in $G$, we denote by $[n] \times [m] \subset \N \times \N$ the integer lattice of $n$ by $m$ integers.
We say that an ordered sequence $F$ of points in $[n] \times [m]$ is a \emph{discrete walk} if for every consecutive pair $(i, j), (k, l) \in F$, 
we have $k\in\{i-1,i,i+1\}$ and $l\in \{j-1,j,j+1\}$. It is furthermore \emph{$xy$-monotone} when we restrict to $k\in \{i,i+1\}$ and $l\in\{j,j+1\}$.
Let $F$ be a discrete walk from $(1, 1)$ to $(n, m)$. 
The \emph{cost} of $F$ is the maximum over $(i, j) \in F$ of $d(p_i, q_j)$. 
The (strong) discrete \frechet distance is the minimum over all ($xy$-monotone) walks $F$ from $(1, 1)$ to $(n, m)$ of its associated cost:
\[
\FD (P, Q) := \min_{F }  \textnormal{cost}(F) = \min_{F } \max_{(i, j) \in F} d(p_i, q_j).
\]
\subparagraph{The discrete free-space matrix.}
In this paper we show an algorithm for computing the discrete \frechet distance between two walks $P$ and $Q$ in a graph $G$. 
To this end, we use what we will call a free-space matrix which can be seen as a discrete free-space diagram.
Given $P$, $Q$ and some real value $\rho$, we construct a $|P| \times |Q|$ matrix  $M$ which we call the free-space matrix $M_\rho$. The $i$'th column of $M_\rho$ corresponds to the vertex $p_i \in P$ and the $j$'th row corresponds $q_j \in Q$.
We assign to each matrix cell $M_\rho[i, j]$ the integer $-1$ if $d(p_i, q_j) \le \rho$, and a $0$ if $d(p_i, q_j) > \rho$.
From our above definition of the discrete \frechet distance, we immediately conclude the following:

\begin{lemma}
The  \frechet distance between $P$ and $Q$ is at most $\rho$, if and only if there exists a discrete ($xy$-monotone) walk $F$ from $(1, 1)$ to $(n, m)$ such that $\forall (i, j) \in F$, $M_\rho[i, j] = -1$.
\end{lemma}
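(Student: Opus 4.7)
The plan is to prove both directions of the biconditional by direct unfolding of the definitions, since the free-space matrix $M_\rho$ is constructed precisely so that its $-1$ entries encode the sublevel set $\{(i,j) : d(p_i,q_j) \le \rho\}$. The statement is essentially a reformulation of $\FD(P,Q) = \min_F \max_{(i,j) \in F} d(p_i,q_j)$ in combinatorial terms, so no new idea beyond definition chasing is needed. The parenthetical ``$xy$-monotone'' captures the strong versus weak distinction uniformly: in both cases the argument is identical, one just restricts the set of admissible walks in the same way on both sides of the equivalence.

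For the forward direction, I would assume $\FD(P,Q) \le \rho$ and take a witness discrete walk $F^*$ from $(1,1)$ to $(n,m)$ (monotone in the strong case) achieving the minimum, so $\operatorname{cost}(F^*) = \max_{(i,j) \in F^*} d(p_i,q_j) \le \rho$. Then for every $(i,j) \in F^*$ we have $d(p_i,q_j) \le \rho$, which by the definition of $M_\rho$ means $M_\rho[i,j] = -1$. Thus $F^*$ is the desired walk.

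For the reverse direction, suppose there is a discrete walk $F$ from $(1,1)$ to $(n,m)$ (monotone if required) with $M_\rho[i,j] = -1$ for every $(i,j) \in F$. By the construction of $M_\rho$, this is equivalent to $d(p_i,q_j) \le \rho$ for every $(i,j) \in F$, hence $\operatorname{cost}(F) \le \rho$. Since $\FD(P,Q)$ is the minimum cost over all admissible walks, $\FD(P,Q) \le \operatorname{cost}(F) \le \rho$.

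There is no real obstacle here; the only thing to be careful about is that the two cases (strong/weak) are handled uniformly by keeping the quantifier over $F$ consistent on both sides, and that both ``for all $(i,j)\in F$'' conditions match the ``$\max$ over $F$'' in the definition of cost. This lemma is purely a translation between the analytic definition of the discrete Fréchet distance and the combinatorial reachability condition on $M_\rho$ that will be exploited algorithmically in the later sections.
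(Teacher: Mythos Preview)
Your proposal is correct and matches the paper's treatment: the paper does not spell out a proof at all but simply states that the lemma follows immediately from the definition of $\FD$ and of $M_\rho$, which is precisely the definition-chasing you carry out. The only microscopic nit is that for the forward direction you do not actually need a minimiser---any walk $F$ with $\operatorname{cost}(F)\le\rho$ suffices---but since the paper writes $\min$ rather than $\inf$ this is harmless.
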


\subparagraph{Orthogonal Vectors Hypothesis.}
The Orthogonal Vectors problem can be stated as follows. Given are a set $A$ and $B$ of $d$-dimensional Boolean vectors with $|A| = n$ and $|B| = m$. 
The goal is to identify whether there exist  two vectors $a = (a_1, a_2, \ldots a_d)$ and  $b = (b_1, b_2, \ldots b_d)$ with $a \in A$ and $b \in B$, such that $a$ and $b$ are orthogonal (i.e. $\sum_{i = 1}^d a_i \cdot b_i = 0$). In this paper, we use the following variant of the Orthogonal Vectors hypothesis. It is implied by SETH, see Abboud and Williams~\cite[Section 3]{abboud2014popular}, and it is equivalent to the standard variant of OHV defined by Williams~\cite{Williams05}, see Bringmann~\cite{bringmann2014walking}. 

\begin{definition}
The Orthogonal Vectors Hypothesis states that for every $\delta > 0$, there exists  constants $\omega > 0$ and $1 > \gamma > 0$ such that the Orthogonal Vectors problem for $d$-dimensional vectors with $d = \omega \log n$ and $m = n^\gamma$, cannot be solved in $O( (nm)^{1 - \delta} )$ time.
\end{definition}

\pagebreak

\subparagraph{Distance oracles.}
A distance oracle is a compact data structure that facilitates fast exact or approximate distance queries between vertices in a graph. 
A distance oracle has \emph{stretch $S$} if it never underestimates the distance, and it at most overestimates by a factor $S$, i.e. $d(a,b)\le d_{\textnormal{estim.}}(a,b)\le S \cdot d(a,b)$.
For general graphs~\cite{RodittyTZ05, ThorupZwick, Wulff-Nilsen16}, the best possible stretch in sub-quadratic space is $3$, but for planar graphs on $N$ vertices, Thorup~\cite{thorup2004compact} shows that it is possible to compute $(1+\varepsilon)$-stretch distance oracles in the near-linear $O(N \log N /\varepsilon)$ time and space, and with a query-time of $O(1/\varepsilon)$. The study of distance oracles for planar graphs is an active research area~\cite{Charalampopoulos19,Cohen-AddadDahlgaardWulff-Nilsen17,DBLP:journals/tcs/GuX19,klein2002preprocessing,DBLP:conf/soda/Klein05,LongPettie,thorup2004compact}.
For $(1+\varepsilon)$-stretch oracles, Gu and Xu~\cite{DBLP:journals/tcs/GuX19} show that it is possible to achieve  constant  query-time \emph{independently of $\varepsilon$} at the cost of an increased construction time and space of $O\left(N(\log N)^4 /\varepsilon +2^{O(1/\varepsilon)}\right)$. 
Even for exact distances, Charalampopoulos et al.~\cite{Charalampopoulos19} give an $O\left(N^{1+o(1)}\right)$-space and $O\left(N^{o(1)}\right)$-query time data structure, and 
Long and Pettie~\cite{LongPettie} improves this to polylogarithmic $O\left((\log(N))^{2+o(1)}\right)$ query time while maintaining the $O\left(N^{1+o(1)}\right)$-space bound. 

In the following sections we use the exact distance oracle by  Long and Pettie~\cite{LongPettie} and the $(1+\eps)$-stretch oracle by Thorup~\cite{thorup2004compact}.  Any distance oracle that improves the efficiency of these data structures,
or any extension of them to larger classes of graphs, immediately leads to improving or extending our results correspondingly.

\subparagraph{From distance oracles to an upper bound.}
Given a distance oracle with $T(G)$ query time it is straightforward to find an $O(nm \cdot T(G))$ time algorithm for computing $\FD(P, Q)$ between two walks $P$ and $Q$ in $G$ that ``matches'' the conditional $\Omega(nm^{1-\delta})$ lower bound.
Indeed, for any pair $(p, q) \in P \times Q$ we can query their pairwise distance in $G$. 
Given such a weighted graph, we want to find an $xy$-monotone path from $(1, 1)$ to $(n, m)$ with minimal cost (which can be done with an $O(nm \cdot T(G))$ dynamic program as by Eiter and Manila~\cite{eitermannila94}).

\subparagraph{$\kappa$-straight paths.}
Alt, Knauer and Wenk~\cite{alt2004comparison} define $\kappa$-straight paths as a generalisation of shortest paths. A path $P$ is $\kappa$-straight if for any two points $s, t \in P$, the length of the subpath $P[s, t]$ from $s$ to $t$ is at most $\kappa \cdot d(s, t)$. 
Shortest paths are $1$-straight. When we replace the term `points' by 'vertices', this definition immediately transfers to our \mbox{graph setting.}

\section{A \texorpdfstring{$(\kappa + 1)$}{(k+1)}-approximation for the discrete \frechet distance}
\label{sec:twoapproximation}

Let $G = (V, E)$ be a planar weighted graph with $N$ vertices and integer weights. 
We assume that we are given $G$ and two walks $P = (p_1, \ldots p_n)$ and $Q = (q_1, \ldots q_m)$, where $P$ is a $\kappa$-straight path.
We use the structure by Long and Pettie~\cite{LongPettie} to compute a $(\kappa + 1)$-approximation of $\FD(P, Q)$. In the following section we extend this approach to a $(1 + \eps)$-approximation.

Recall that the decision variant of the \frechet distance may be answered with the help of a free-space matrix $M_\rho$.
Here, we extend the definition of the free-space matrix.

\begin{definition}
We denote by $\M$ the \emph{$\kappa$-straight free-space matrix}, which is a matrix with dimensions $n \times m$.
We define the matrix $\M[i, j]$ as follows:
\begin{itemize}
    \item $\M[i, j] = -1$ if the distance $d(p_i, q_j) \le \rho$,
    \item $\M[i, j] = 1$ if the distance  $d(p_i, q_j) > (\kappa + 1) \rho$, or
    \item $\M[i, j] = 0$ otherwise.
\end{itemize}
\end{definition}

\noindent
Every cell $\M[i, j]$ has a corresponding point $(i, j)$ in the integer lattice $[n] \times [m]$. 
The discrete \frechet distance is at most $\rho$, iff there exists a discrete walk $F$ through $[n] \times [m]$ where for every pair $(i, j) \in F$, $\M[i, j] = -1$. 
Explicitly constructing $\M$ takes at least $\Omega(nm)$ time.
However, we show that we can use the distance oracle to implicitly traverse $\M$ to find the existence of such a discrete walk.
To this end, we first show the following:

\begin{figure}[t]
  \centering
  \includegraphics{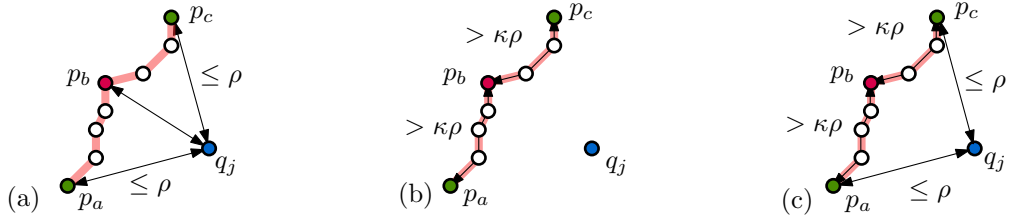}
  \caption{
    (a) Three vertices $p_a, p_b, p_c \in P$ and a vertex $q_j \in Q$ such that $\M[a, j] = \M[c, j] = -1$ and $\M[b, j] = 1$.
(b) We show that the distance between $p_a$ and $p_b$ must be more than $\kappa \rho$.
(c) However, this implies that $P$ is not $\kappa$-straight, as there is a shortcut from $p_a$ to $p_c$ through $q_j$.
  }
  \label{fig:boundedarea}
  \vspace{-0.3cm}
\end{figure}

\begin{lemma}
\label{lemma:boundedarea}
Let $P$ be a $\kappa$-straight path and $Q$ a walk in $G$, $\rho$ be some fixed value and $j \leq m$ some integer. 
For any two integers $a, c$ such that $\M[a, j] = -1$ and $\M[c, j] = -1$, there cannot be an integer $b \in [a, c]$ for which $\M[b, j] = 1$. 
\end{lemma}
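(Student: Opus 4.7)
The plan is to argue by contradiction: assume there exists $b \in [a, c]$ with $\M[b,j] = 1$, i.e. $d(p_b, q_j) > (\kappa+1)\rho$, while $d(p_a, q_j) \le \rho$ and $d(p_c, q_j) \le \rho$, and derive a violation of the $\kappa$-straightness of $P$.

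The first step is to use the triangle inequality in the shortest-path metric of $G$ to bound how far $p_b$ must lie from $p_a$ and from $p_c$. From $d(p_b, q_j) > (\kappa+1)\rho$ and $d(p_a, q_j) \le \rho$ we get
\[
d(p_a, p_b) \;\ge\; d(p_b, q_j) - d(p_a, q_j) \;>\; (\kappa+1)\rho - \rho \;=\; \kappa\rho,
\]
and symmetrically $d(p_b, p_c) > \kappa\rho$. Since $b$ lies between $a$ and $c$ along $P$, the vertex $p_b$ lies on the subpath $P[p_a, p_c]$, so the length of $P[p_a, p_c]$ is at least $d(p_a, p_b) + d(p_b, p_c) > 2\kappa\rho$ (each subpath of $P$ between two vertices is a walk in $G$, hence has length at least the shortest-path distance between those endpoints).

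The second step is to upper bound the direct distance $d(p_a, p_c)$ using $q_j$ as a relay: again by the triangle inequality,
\[
d(p_a, p_c) \;\le\; d(p_a, q_j) + d(q_j, p_c) \;\le\; 2\rho.
\]
Combining the two bounds, the subpath $P[p_a, p_c]$ has length strictly greater than $2\kappa\rho \ge \kappa \cdot d(p_a, p_c)$, contradicting the assumption that $P$ is a $\kappa$-straight path. Hence no such $b$ can exist.

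I do not foresee a real obstacle; the lemma reduces to two applications of the triangle inequality plus the definition of $\kappa$-straightness (with the mild observation that a subpath of $P$ is itself a walk whose weight upper bounds the shortest-path distance between its endpoints). The only place to be careful is to keep the inequality strict in the right direction so that the contradiction with $\text{length} \le \kappa \cdot d(p_a, p_c)$ goes through, which the strict bound $d(p_b, q_j) > (\kappa+1)\rho$ from the definition of $\M[b,j] = 1$ supplies.
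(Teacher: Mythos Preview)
Your proof is correct and follows essentially the same approach as the paper's: both argue by contradiction, use the triangle inequality (the paper phrases it as ``a path through $p_a$ to $q_j$'') to show the subpaths $P[p_a,p_b]$ and $P[p_b,p_c]$ each have length exceeding $\kappa\rho$, and then observe that the detour via $q_j$ gives $d(p_a,p_c)\le 2\rho$, violating $\kappa$-straightness. Your version is slightly more explicit about the triangle inequality and about why subpath length dominates shortest-path distance, but the argument is the same.
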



\begin{proof}

Suppose for the sake of contradiction that there are three integers $a, b, c$ with $b \in [a, c]$,  $\M[a, j] = -1$ and $\M[c, j] = -1$ and $\M[b, j] = 1$. It cannot be that $b = a$ or $b = c$, so there are three vertices $p_a, p_b, p_c \in P$ with $d(p_a, q_j) \le \rho$, $d(p_c, q_j) \le \rho$ and $d(p_b, q_j) > (\kappa + 1) \rho$ (Figure~\ref{fig:boundedarea}). Moreover, $p_b$ lies on the $\kappa$-straight subpath $P[p_a, p_c]$. 
It follows that the length of $P[p_a, p_b]$  is more than $\kappa \rho$ (else, the distance between $p_b$ and $q_j$ is at most $(\kappa + 1) \rho$ through the path that goes through $p_a$ to $q_j$). We can apply a symmetric argument to  $P[p_b, p_c]$.
Thus, the length of $P[p_a, p_c]$ is more than $2 \kappa \rho$.
At the same time, there exists a path in $G$ from $p_a$ to $p_b$ through $q_j$ of length at most $2 \rho$.
This contradicts that $P$ is a $\kappa$-straight path.
\end{proof}

\noindent
A consequence of the above lemma is the following: let $(i, j)$ be a lattice point for which $\M[i, j] = -1$. 
For the nearest lattice point $(l, j)$ left of $(i, j)$ for which $\M[l, j] = 1$, there can be no lattice point left of $(l, j)$ for which the matrix evaluates to $-1$. A symmetrical statement holds for the nearest such point right of $(i, j)$.
This leads to the following  algorithm to conclude if $\FD(P, Q) \le (\kappa + 1) \rho$ or $\FD(P, Q) > \rho$, where we construct a discrete walk $F'$:

We compute the distance oracle in $O(N^{1 + o(1)})$ time.
If $\M[1, 1] > -1$ then our algorithm terminates and concludes that $\FD(P, Q) > \rho$. 
We iteratively perform the following procedure, to construct a path $F'$. Let $(i, j)$ be the latest point added to $F'$, then:

\begin{enumerate}[]
    \item If $(i, j) = (n, m)$ the algorithm terminates and concludes that $\FD(P, Q) \le (\kappa + 1 ) \rho$.
        \item If $(j+1) > m$, go to step 4.
    \item Otherwise, we use two distance queries to check $\M[i, j + 1]$ and $\M[i+1, j+1]$:
    \begin{enumerate}[(i)]
    \item  If $\M[i, j + 1] = -1$, add $(i, j+1)$ to $F'$.
    \item Else if $\M[i+1, j+1] = -1$, add $(i+1, j+1)$ to $F'$. 
    \end{enumerate}
    \item Otherwise, we use a distance query to check if $\M[i + 1, j]$:
    \begin{enumerate}[(i)]
        \item If $(i+1) > n$ or $\M[i + 1, j] = 1$, \newline we terminate the procedure and conclude that $\FD(P, Q)  > \rho$. 
        \item Otherwise, we add $(i + 1, j)$ to $F'$.
    \end{enumerate}
\end{enumerate}

\begin{figure}[h]
  \centering
  \includegraphics{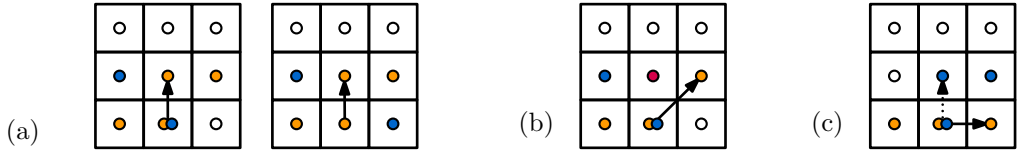}
  \caption{
    Lattice points to prove Lemma~\ref{lemma:totheleft}.
    Blue $\in F$. Orange $\in F'$ and Red $\not \in F$. 
  }
  \label{fig:totheleft}
\end{figure}

\begin{lemma}
\label{lemma:totheleft}
Let $P$ be a shortest path in $G$, $Q$ be any walk and $\FD(P, Q) < \rho$.
Denote by $F$ an $xy$-monotone path over the lattice $[n] \times [m]$ such that for all $(i, j) \in F$, $M[i, j] = -1$.
All lattice points in our constructed path $F'$ are either in $F$ or lie to the left of a point of $F$. 
\end{lemma}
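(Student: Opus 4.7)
The plan is to proceed by induction on the steps of the algorithm, with the invariant that whenever the latest point added to $F'$ is $(i,j)$, the row $j$ of $F$ contains a point $(i',j)$ with $i' \geq i$. The structural fact that makes this manageable is that an $xy$-monotone walk $F$ from $(1,1)$ to $(n,m)$ in unit steps visits every row $j \in [m]$ on a contiguous column range $[\ell_j, r_j]$, with transitions from $(r_j, j)$ to $(\ell_{j+1}, j+1)$ satisfying $\ell_{j+1} \in \{r_j, r_j + 1\}$. Letting $r_j$ denote the rightmost column of $F$ in row $j$, the invariant $r_j \geq i$ gives exactly the lemma: either $i \in [\ell_j, r_j]$ and $(i, j) \in F$, or $(i, j)$ lies strictly left of $(r_j, j) \in F$.

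The base case is immediate since $(1,1) \in F$. For the inductive step, I assume $r_j \geq i$ at the current point $(i, j) \in F'$ and split into the three possible moves of the algorithm. A \emph{vertical} step to $(i, j+1)$ (taken only when $\M[i,j+1]=-1$) preserves the invariant via $r_{j+1} \geq \ell_{j+1} \geq r_j \geq i$. A \emph{diagonal} step to $(i+1, j+1)$ is taken under $\M[i, j+1]\neq -1$ and $\M[i+1, j+1]=-1$: if $r_j > i$ then $\ell_{j+1} \geq r_j \geq i+1$, while if $r_j = i$ the transition of $F$ out of row $j$ must land at $(i, j+1)$ or $(i+1, j+1)$; the first option forces $M[i, j+1]=-1$ and hence $\M[i, j+1]=-1$, contradicting the case hypothesis, so $(i+1, j+1) \in F$ and $r_{j+1} \geq i+1$. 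A \emph{horizontal} step to $(i+1, j)$ splits similarly: either $j = m$, where $r_m = n$ and $i < n$ (because the algorithm has not terminated at $(n, m)$), giving $r_j \geq i+1$; or $j < m$ with $\M[i, j+1] \neq -1$ and $\M[i+1, j+1] \neq -1$, in which case the same exclusion argument as in the diagonal case rules out $r_j = i$.

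The main obstacle is the degenerate case $r_j = i$ arising in the diagonal and horizontal transitions, where the invariant needs to be strictly advanced. It is resolved by the following observation: since every cell of $F$ satisfies $M = -1$ (equivalently $\M = -1$), the two candidate cells $(i, j+1)$ and $(i+1, j+1)$ through which $F$ can leave row $j$ must both evaluate to $-1$ in $\M$, which is precisely what the case hypothesis excludes. All remaining cases reduce routinely to the structural bound $\ell_{j+1} \geq r_j$, so the induction goes through and the lemma follows.
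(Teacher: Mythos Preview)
Your proof is correct and follows essentially the same case analysis as the paper: both arguments split on which of the three algorithm moves (vertical, diagonal, horizontal) produced the current point, and both hinge on the observation that when $F$ leaves row $j$ at column $r_j$, it must enter row $j+1$ at one of the two cells $(r_j,j+1)$ or $(r_j+1,j+1)$, each of which then has $\M$-value $-1$.

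The organisational difference is that you run a forward induction maintaining the explicit invariant $r_j \ge i$ (where $r_j$ is the rightmost column of $F$ in row $j$), whereas the paper argues by contradiction on the first point of $F'$ that lands strictly to the right of $F$. Your row-interval decomposition $[\ell_j,r_j]$ with $\ell_{j+1}\in\{r_j,r_j+1\}$ makes the monotonicity of $F$ quantitatively explicit, which streamlines the vertical case to the single chain $r_{j+1}\ge \ell_{j+1}\ge r_j\ge i$ and isolates cleanly the only delicate subcase ($r_j=i$) in the diagonal and horizontal moves. The paper's contradiction version is terser but leaves the same structural fact implicit in the phrase ``$F$ is not $xy$-monotone.'' Neither approach gains anything substantive over the other; they are dual presentations of the same argument.
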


\begin{proof}
Consider for the sake of contradiction the first iteration where the algorithm would add a lattice point $(c, d)$ right of a point in $F$.
Let $(a, b) \in F'$ be the point preceding $(c, d)$.
We make a case distinction based on whether $(c, d)$ was added through step $3$(i), $3$(ii) or $4$(iii). The three cases are illustrated by Figure~\ref{fig:totheleft}, (a) (b) and (c) respectively.

First suppose that $(c, d) = (a, b + 1)$.
Since $(c, d)$ is the first point right of $F$, it must be that $F$ contains either $(a, b)$ or a point right of $(a, b)$. 
Moreover (since $(c, d)$ is right of $F$), $F$ also contains a point left of $(a, b+1)$. 
This implies that $F$ is not $xy$-monotone, contradiction. 

Now suppose that $(c, d) = (a+1, b+1)$.
Because we reached step $3$(ii), we know that $\M[a, b + 1] > -1$ and thus $(a, b+1) \not \in F$. 
However, since $(c, d)$ is the first point right of $F$, $F$ either contains $(a, b)$ or a point right of $(a, b)$,
and a point strictly left of $(a, b+1)$.
This implies that $F$ is not $xy$-monotone which is a contradiction.

Finally, suppose that $(c, d) = (a+1, b)$. 
Since $(c, d)$ is the first point right of $F$, it must be that $(a, b) \in F$. 
However, consider now the successor of $(a, b)$ in $F$.
Since $F$ is $xy$-monotone, this successor is either $(a, b+1)$ or $(a+1, b+1)$, as it cannot be $(a+1, b) = (c, d)$. 
However, this implies that either $\M[a, b+1] = -1$ or $\M[a+1, b+1] = -1$, which contradicts the assumption that we have reached step 4 of the algorithm.
\end{proof}

\noindent
With these two observations, we are ready to prove our main theorem:

\begin{theorem}
\label{thm:strongdecision}
Let $G$ be a planar graph, $P = (p_1, \ldots, p_n )$ be a $\kappa$-straight path
and $Q = (q_1, \ldots ,q_{m})$ be any walk in $G$.
Given a value $\rho \in \R$, we correctly conclude either $\FD(P, Q) > \rho$ or $\FD(P, Q) \leq (\kappa + 1) \rho$ using $O( (n + m) \log^{2 + o(1)} N + N^{1 + o(1)})$ time and $O(N^{1 + o(1)})$ space.
\end{theorem}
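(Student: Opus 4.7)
The plan is to analyze the greedy procedure described immediately above the theorem statement. The running-time bound is essentially bookkeeping: initializing the Long--Pettie exact distance oracle costs $O(N^{1+o(1)})$ time and space, and every subsequent query costs $O(\log^{2+o(1)} N)$. Each iteration of the main loop performs a constant number of such queries and either halts or strictly advances $(i,j)$ by at least one coordinate, so the loop runs for at most $n+m$ iterations before reaching $(n,m)$ or outputting $>\rho$. This gives the claimed $O((n+m)\log^{2+o(1)} N + N^{1+o(1)})$ time bound.

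For correctness I would prove two implications separately. The easy direction is that when the algorithm outputs $\FD(P,Q)\le(\kappa+1)\rho$ it has explicitly constructed an $xy$-monotone walk $F'$ from $(1,1)$ to $(n,m)$, and every point added to $F'$ in steps 1, 3(i)--(ii), or 4(ii) satisfies $\M[i,j]\ne 1$, i.e.\ $d(p_i,q_j)\le(\kappa+1)\rho$. Hence $F'$ certifies the upper bound on $\FD(P,Q)$.

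The hard direction, which I expect to be the main obstacle, is the converse: if the algorithm outputs $>\rho$ then indeed $\FD(P,Q)>\rho$. I would argue by contrapositive. Assume $\FD(P,Q)\le\rho$; then there exists an $xy$-monotone walk $F$ with $\M[\cdot,\cdot]=-1$ at every one of its cells, and by Lemma~\ref{lemma:totheleft} the partial walk $F'$ always lies on $F$ or strictly to its left in the same row. Suppose toward contradiction that the algorithm nevertheless reaches step 4(i) at some current point $(i,j)$. The key invariant I would establish by induction on the iterations is: the \emph{first} point $(i_0,j)$ of $F'$ in row $j$ (which was added when descending from row $j-1$ via step 3(i) or 3(ii)) satisfies $\M[i_0,j]=-1$, while the horizontal continuations $(i_0{+}1,j),\dots,(i,j)$ added by step 4(ii) only require $\M\in\{-1,0\}$. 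By Lemma~\ref{lemma:totheleft}, $F$ contains some point $(i'',j)$ with $i''\ge i+1$ (either $(i,j)\in F$ and its $xy$-monotone successor sits in a strictly greater column, or $(i,j)$ is already strictly left of $F$). Thus $\M[i'',j]=-1$, and applying Lemma~\ref{lemma:boundedarea} with $a=i_0$ and $c=i''$ to the intermediate column index $b=i+1\in[i_0,i'']$ yields $\M[i+1,j]\ne 1$, contradicting the termination condition in step 4(i). The subcase $i+1>n$ is handled analogously: one argues via Lemma~\ref{lemma:totheleft} that $F$ must contain the cell $(n,j+1)$, so $\M[n,j+1]=-1$ and step 3(i) would have fired before reaching step 4.

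Combining the two directions establishes correctness, and together with the running-time analysis above this proves the theorem. The subtle ingredient worth emphasizing is the invariant that the entry cell $(i_0,j)$ of row $j$ in $F'$ has $\M=-1$ even though later horizontal cells in that row may have $\M=0$; this is precisely what enables Lemma~\ref{lemma:boundedarea} to apply and rules out the hypothetical $\M[i+1,j]=1$.
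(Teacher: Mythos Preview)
Your proposal is correct and follows essentially the same approach as the paper. Both arguments use the entry-cell invariant (the paper calls the entry index $\ell$, you call it $i_0$), Lemma~\ref{lemma:totheleft} to locate $F$ relative to the current cell $(i,j)$, and Lemma~\ref{lemma:boundedarea} to rule out $\M[i{+}1,j]=1$; the only cosmetic difference is that the paper first deduces $(i,j)\in F$ and then contradicts via the successor of $F$, whereas you directly exhibit a column $i''\ge i{+}1$ with $\M[i'',j]=-1$ and invoke Lemma~\ref{lemma:boundedarea}.
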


\begin{proof}
We construct the distance oracle using $O(N^{1 + o(1)})$ time and space.
Given $\rho$, our algorithm spends at most  $n + m$ iterations before it either reaches $(n, m)$ or step $4$(i) and terminates.
At each iteration we perform at most three distance queries. We prove that if $\FD(P, Q) \le \rho$, we always conclude that $\FD(P, Q) \le (\kappa + 1) \rho$.
Indeed, suppose that $\FD(P, Q) \le \rho$ then there exists a discrete walk $F$ such that for every $(i, j) \in F$, $\M[i, j] = -1$ and $F$ is $xy$-monotone. 
Per construction, the path $F'$ is $xy$-monotone.

What remains to show is that $F'$ is from $(1, 1)$ to $(n, m)$.
Suppose for the sake of contradiction that $F'$ does not reach $(n, m)$ and let $(i, j)$ be the last element added to $F'$ before the algorithm terminated in step $4$.
Since we reached step $4$ it must be that:
\[
\M[i, j + 1] > -1 \textnormal{ and } \M[i+1, j+1] > -1 \quad \textnormal{       (or }(j+1 \le m) \textnormal{)}.
\]
Let $\ell \le i$ be the lowest integer such that $\M[\ell, j] = -1$.
Such an $\ell$ must always exist, since we only enter the $j$'th row through a point $(k, j)$ for which $\M[k, j] = -1$ (step $3$(i) or $3$(ii)).
Since we arrived in step $4$(i), it must be that either $\M[i+1, j] = 1$ or $(i+1) > n$.
However, this implies
that $(i, j) \in F$ (indeed, by Lemma~\ref{lemma:totheleft} there exists a point equal to or right of $(i, j)$ in $F$. However, given Lemma~\ref{lemma:boundedarea} and $(\ell, i)$, there is no a point in $F$ right of $(i, j)$).
Because if $F$ is $xy$-monotone, the successor of $(i, j) \in F$ is either $(i+1, j+1)$, $(i+1, j)$ or $(i, j+1)$. However, since we reached step $4$(i), none of these elements can be in $F$, contradiction.
\end{proof}

\noindent
The following corollary is a direct result of the assumption that edge weights each fit in a constant number of words (thus, the range of values for $\FD(P, Q)$ is polynomial in $N$).
\begin{corollary}
Let $G$ be a  planar graph with $N$ vertices, $P = (p_1, \ldots, p_n )$ be a $\kappa$-straight path in $G$ 
and let $Q = (q_1, \ldots ,q_{m})$ be any walk in $G$.
We can compute a $(\kappa + 1)$-approximation of the discrete \frechet distance between $P$ and $Q$ in $O( (n + m) \log^{3 + o(1)} N + N^{1 + o(1)})$ time.
\end{corollary}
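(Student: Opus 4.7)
The plan is to reduce the $(\kappa+1)$-approximation task to a single binary search that invokes the decision procedure of Theorem~\ref{thm:strongdecision} a logarithmic number of times, after a one-time construction of the Long--Pettie distance oracle.

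First observe that since every edge weight fits in $\Theta(\log N)$ bits, the maximum edge weight is bounded by some $W_{\max} = N^{O(1)}$, so every shortest-path distance in $G$ --- and in particular $\FD(P,Q)$ --- is a non-negative integer in $[0, D]$ with $D := N \cdot W_{\max} = N^{O(1)}$ and $\log D = O(\log N)$. I would then initialize $\mathrm{lo} = 0$ and $\mathrm{hi} = D$ and maintain the invariant that the decision procedure has already certified $\FD(P,Q) > \mathrm{lo}$ at the lower endpoint and $\FD(P,Q) \le (\kappa+1)\,\mathrm{hi}$ at the upper endpoint. At each step, I would query Theorem~\ref{thm:strongdecision} on $\rho = \lceil(\mathrm{lo}+\mathrm{hi})/2\rceil$ and advance the endpoint corresponding to the verdict returned. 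Halt when $\mathrm{hi} - \mathrm{lo} = 1$; since $\FD(P,Q)$ is an integer, the invariant then gives $\mathrm{hi} \le \FD(P,Q) \le (\kappa+1)\,\mathrm{hi}$, so returning $(\kappa+1)\,\mathrm{hi}$ is a $(\kappa+1)$-approximation. The boundary case $\FD(P,Q) = 0$ is handled by probing $\rho = 0$ first.

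For the running time, the oracle is built once in $O(N^{1+o(1)})$ time and reused across all $O(\log D) = O(\log N)$ decision calls, each of which then costs only the query-phase $O((n+m)\log^{2+o(1)} N)$ portion of Theorem~\ref{thm:strongdecision}. Summing over the iterations yields the claimed $O((n+m) \log^{3+o(1)} N + N^{1+o(1)})$ bound.

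The main point to check carefully is that the decision procedure has a slack of factor $(\kappa+1)$, so on inputs $\rho$ in the ambiguous range $[\FD/(\kappa+1),\,\FD)$ either verdict is consistent with the algorithm's guarantees; consequently the verdict is not necessarily a monotone function of $\rho$, and a textbook analysis of binary search that assumes monotonicity would be incorrect. The fix is conceptually trivial but should be stated explicitly: the invariant above depends only on the one-sided statements actually returned at queried values, each of which is an unconditional true bound on $\FD(P,Q)$, so contracting $[\mathrm{lo}, \mathrm{hi}]$ using those bounds is sound regardless of monotonicity, and the search terminates correctly in $O(\log N)$ steps.
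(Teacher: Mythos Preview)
Your proposal is correct and follows the same approach the paper takes: the paper simply remarks that the corollary ``is a direct result of the assumption that edge weights each fit in a constant number of words (thus, the range of values for $\FD(P,Q)$ is polynomial in $N$),'' i.e., binary search over an integer range of size $N^{O(1)}$ using Theorem~\ref{thm:strongdecision}. Your writeup is more careful than the paper's one-line justification---in particular, your explicit treatment of the non-monotone decision verdict and the $\FD=0$ boundary case are nice touches the paper omits---but the underlying argument is the same.
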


\newpage
\section{A \texorpdfstring{$(1 + \eps)$}{(1+e)}-approximation for \frechet distance}
\label{sec:epsilonapproximation}

We present a more involved approach to compute a $(1 + \eps)$ approximation of $\FD(P, Q)$.
Specifically, we choose $(1 + \eps) =  (1+ \alpha)(1+ \alpha  + \beta)$ for some \mbox{$\alpha$ and $\beta$}. We show for any $\rho$ how to correctly conclude either $\FD(P, Q) \leq (1+ \alpha)(1+ \alpha  + \beta)\rho$ \mbox{or $\FD(P, Q) > \rho$}. 

To obtain this result, we use two data structures. 
A Voronoi diagram of $P$ in $G$ marks every vertex in $v$ with the closest vertex $p \in P$ (and the exact distance $d(v, p)$). 
For completeness, we prove in Appendix~\ref{appx:voronoi} the following (folklore) result: 

\begin{restatable}{theorem}{VoronoiD}
For any planar weighted graph $G = (V, E)$ and any vertex set $P \subseteq V$, it is possible to construct the Voronoi diagram of $P$ in $G$ in $O(|V| \log |V|)$ time.
\end{restatable}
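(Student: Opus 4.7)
The plan is to reduce the Voronoi diagram construction to a single-source shortest path computation by the standard multi-source Dijkstra trick. First I would construct an augmented graph $G' = (V \cup \{s\}, E \cup \{(s,p) : p \in P\})$ by adding one fresh source vertex $s$ together with a zero-weight edge from $s$ to every $p \in P$. Then I would run Dijkstra's algorithm from $s$ on $G'$, storing at each vertex $v$ both its distance label and its parent pointer.

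For correctness, the key observation is that every shortest $s$-to-$v$ walk in $G'$ begins with one of the zero-weight edges $(s, p)$, so its length equals the length of the corresponding shortest $p$-to-$v$ path in $G$. Hence Dijkstra's distance label at $v$ equals $\min_{p \in P} d(p, v)$, and following parent pointers from $v$ back until the vertex immediately before $s$ recovers a closest site in $P$. Ties between several equidistant sites can be broken arbitrarily (for instance, by site index), consistently with any standard Voronoi convention.

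For the running time, $G'$ has $|V|+1$ vertices and $|E| + |P| = O(|V|)$ edges, where I use that $G$ is planar (so $|E| = O(|V|)$) and that $|P| \le |V|$. Dijkstra's algorithm with a binary heap then runs in $O((|V|+|E|)\log|V|) = O(|V|\log|V|)$ time, matching the claimed bound; with a Fibonacci heap one would even get $O(|V|\log|V|)$ via the standard analysis, and with a linear-time planar shortest-path algorithm one could shave the log factor, but that is not needed here.

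I do not expect a real obstacle; the argument is essentially textbook multi-source Dijkstra. The only things worth double-checking are that the added edges and source do not inflate the size of the graph beyond $O(|V|)$ (they do not, since $|P| \le |V|$), and that $G'$ need not itself be planar for the complexity argument to go through, as Dijkstra's complexity depends only on $|V|+|E|$ and not on planarity once the edge count has been bounded.
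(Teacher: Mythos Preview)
Your proposal is correct and takes essentially the same approach as the paper: both reduce the problem to a single multi-source Dijkstra run and then use planarity only to bound $|E|=O(|V|)$. The paper phrases it as initialising Dijkstra's wavefront with all of $P$ and propagating a $\textsc{site}$ label, whereas you phrase it via the equivalent super-source construction; these are two standard formulations of the same algorithm.
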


\noindent
Additionally, we use the $(1+ \alpha)$-stretch distance oracle $\dist$ by Thorup~\cite{thorup2004compact}.
We differentiate between the distance $d(p_i, q_j)$ and what we call the \emph{preceived} distance between $p_i$ and $q_j$. 
For any two vertices $p_i, q_j$ we denote by $\d(p_i, q_j)$ their \emph{perceived} distance (the result of the distance query of $\dist$). Per definition $d(p_i, q_j) \leq \d(p_i, q_j) \leq (1 + \alpha)  \cdot d(p_i, q_j)$.

\begin{definition}
For a given value $\rho \in \R$ we denote by $\Meps$ the approximate free-space matrix, which is a matrix with dimensions $n \times m$ where:
\begin{itemize}
    \item $\Meps[i, j] = -1$ if the perceived distance $\d(p_i, q_j) \le (1 + \alpha)\rho$,
    \item  $\Meps[i, j] = 1$ if the perceived distance $\d(p_i, q_j) > (1+ \alpha)(1 + \alpha  + \beta)\rho$, or
    \item $\Meps[i, j] = 0$ otherwise.
\end{itemize}
\end{definition}

\subparagraph{$\beta$-compression.}
Given a $\kappa$-straight path $P$ and real values $(\rho, \beta)$ we define the \mbox{$\beta$-compression} $P^\beta$ as an ordered set that is obtained in three steps (Figure~\ref{fig:compression}):
\begin{itemize}
    \item The first step is a greedy iterative process where:
    \begin{itemize}
        \item we remove  (consecutive) $p_x$ where the length of $P[p_1, p_x]$ is fewer than $\beta \rho$. 
        \item the first such vertex $p_i$ that does not meet this criterion is added to $P^\beta$. Then, we remove (consecutive)  $p_x$ where the length of $P[p_i, p_x ]$ is fewer than $\beta \rho$. and so forth.
    \end{itemize} 
    \item In the second step we add for every vertex in $P^\beta$ its preceding vertex in $P$.
    \item In the third step we add $p_n$.
\end{itemize}

\noindent
The result of this procedure is that we have an ordered set $P^\beta$ with $n' \leq n$ vertices. We create a map $\pi : [n'] \hookrightarrow [ n ] $ that maps every vertex in $P^\beta$ to its corresponding vertex in $P$ (i.e. the $k$'th element of $P^\beta$ is $p_{\pi(k)}) \in P$ and we observe:

\begin{itemize}
    \item $\pieps(1) = 1$ and $\pieps(n') = n$,
    \item for all $i$, the length of $P[p_{\pieps(i)}, p_{\pieps(i+3)}]$ is greater than $\beta \rho$ and
    \item  for all $x \in [\pieps(i), \pieps(i+1)]$, the exact distance $d(p_{\pieps(i)}, p_x) < \beta \rho$ and $d(p_{\pieps(i + 1)}, p_x) < \beta \rho$.
\end{itemize} 

\begin{figure}[t]
  \centering
  \includegraphics{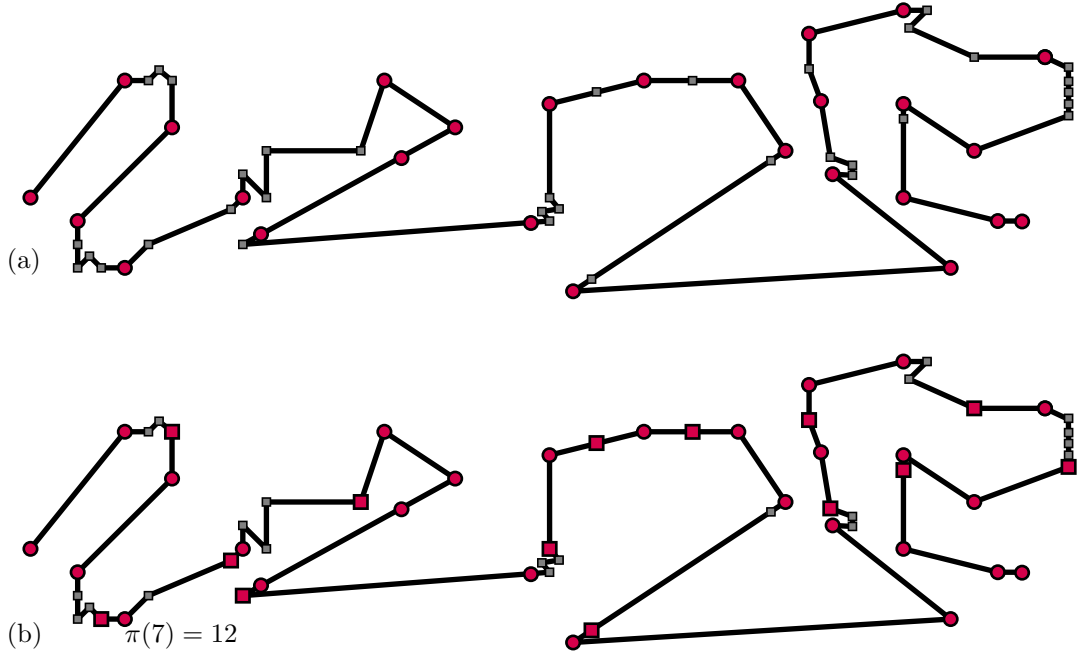}
  \caption{
  A planar path where the edge weights correspond to their length.
    (a)  We greedily add vertices to $P^\beta$ such that for all vertices $p_x \in P$ with preceding vertex $p_i \in P^\beta$ the length of $P[p_i, p_x]$ is at most $\beta \rho$. 
    (b) For every vertex in $P^\beta$, we subsequently add its preceding vertex in $P$ to $P^\beta$.
  }
  \label{fig:compression}
\end{figure}

\noindent
We denote $P^\beta = \left(p_{\pieps(1)}, p_{\pieps(2)}, \ldots p_{\pieps(n')} \right)$.
The global approach is to approximate the \frechet distance between $P^\beta$ and $Q$ instead. 
We first note the following three properties of $P^\beta$:

\begin{lemma}
\label{lemma:projectcompression}
For every two integers $i$ and $j$, if $\Meps[\pieps(i), j] = -1$, then for all integers $x \in (\pieps(i - 1), \pieps(i + 1) )$ it must be that  $\Meps[i, j] \le 1$.
\end{lemma}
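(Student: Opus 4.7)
The plan is to chase the hypothesis through the triangle inequality in the shortest-path metric, using the stretch guarantee of $\dist$ twice in opposite directions. First I would translate the hypothesis $\Meps[\pieps(i), j] = -1$ into a bound on the true distance: by definition this means $\d(p_{\pieps(i)}, q_j) \le (1+\alpha)\rho$, and since the perceived distance is always at least the true distance, this gives $d(p_{\pieps(i)}, q_j) \le (1+\alpha)\rho$.

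Next I would invoke the third listed property of the $\beta$-compression to argue that any $x \in (\pieps(i-1), \pieps(i+1))$ lies within true distance $\beta\rho$ of $p_{\pieps(i)}$. This needs a small case split: if $x \in [\pieps(i), \pieps(i+1)]$, the bound $d(p_{\pieps(i)}, p_x) < \beta\rho$ is exactly the first clause of the property; if instead $x \in [\pieps(i-1), \pieps(i)]$, I apply the same property with index $i-1$ and use its second clause, again yielding $d(p_{\pieps(i)}, p_x) < \beta\rho$.

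Combining the two bounds via the triangle inequality in the shortest-path metric gives
\[
d(p_x, q_j) \;\le\; d(p_x, p_{\pieps(i)}) + d(p_{\pieps(i)}, q_j) \;<\; \beta\rho + (1+\alpha)\rho \;=\; (1+\alpha+\beta)\rho.
\]
Finally, applying the stretch guarantee of $\dist$ in the other direction yields $\d(p_x, q_j) \le (1+\alpha)\cdot d(p_x, q_j) < (1+\alpha)(1+\alpha+\beta)\rho$, which rules out $\Meps[x, j] = 1$, since that value requires the perceived distance to strictly exceed $(1+\alpha)(1+\alpha+\beta)\rho$. Hence $\Meps[x, j] \le 0 \le 1$, as claimed.

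I do not anticipate a genuine obstacle here; the lemma is essentially a careful triangle-inequality calculation whose only subtlety is keeping the two uses of the oracle's stretch inequality pointed in the right direction (the hypothesis turns perceived into true via $d \le \d$, the conclusion turns true back into perceived via $\d \le (1+\alpha)d$) and remembering to invoke the compression property with index $i-1$ when $x$ lies to the left of $\pieps(i)$.
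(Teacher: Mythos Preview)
Your proposal is correct and essentially identical to the paper's proof: both translate the hypothesis $\Meps[\pieps(i),j]=-1$ into the true-distance bound $d(p_{\pieps(i)},q_j)\le(1+\alpha)\rho$ via $d\le\d$, combine it with $d(p_{\pieps(i)},p_x)<\beta\rho$ from the compression property (splitting into the left and right half-intervals), apply the triangle inequality to bound $d(p_x,q_j)$, and then use the stretch bound $\d\le(1+\alpha)d$ to conclude $\Meps[x,j]<1$. The only cosmetic difference is that the paper phrases the left case as ``either $p_{\pieps(i-1)}$ and $p_{\pieps(i)}$ are consecutive (so the interval is empty) or the subpath has length $<\beta\rho$,'' whereas you directly cite the third compression property with index $i-1$; these are equivalent.
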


\begin{proof}
Either $p_{\pieps(i-1)}$ and $p_{\pieps(i)}$ are consecutive in $P$ (thus, the set $ (\pieps(i - 1), \pieps(i) )$ is empty) or per construction the length of $P[p_{\pieps(i-1)}, p_{\pieps(i)}]$ is less than $\beta \rho$. 

Thus, if the perceived distance $\d(p_{\pieps(i)}, q_j) \leq (1 + \alpha)\rho$, 
then for all points $p_x$ with $x \in  (\pieps(i - 1), \pieps(i) )$, the exact distance $d(p_x, q_j) \leq (1 + \alpha +  \beta) \rho$ by traversing through $p_{\pieps(i)}$.
Thus, the perceived distance  $\d(p_x, q_j) \leq (1 + \alpha)(1 + \alpha  + \beta) \rho$.
A symmetrical argument holds for all $x \in  (\pieps(i), \pieps(i + 1) )$.
\end{proof}

\begin{lemma}
\label{lemma:interiormistake}
For all $i$ and $j$, 
if there exists an integer $x \in (\pieps(i), \pieps(i+1))$ such that $\Meps[x,  j] = -1$, then $\Meps[\pieps(i), j] \le 1$ and $\Meps[\pieps(i + 1), j] \le 1$.
\end{lemma}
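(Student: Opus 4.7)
The plan is to chain together three ingredients: the third structural property of the $\beta$-compression listed immediately after its construction (which controls in-path vertices by their neighbouring anchors), the triangle inequality in the shortest-path metric of $G$, and the stretch guarantee of the $(1+\alpha)$-stretch oracle $\dist$. The critical use of the hypothesis is that $x$ lies in the \emph{open} interval $(\pieps(i), \pieps(i+1))$, which forces the indices $\pieps(i)$ and $\pieps(i+1)$ to be non-consecutive in $P$ and thereby places the pair in a block of $P^\beta$ in which the construction's $\beta\rho$ distance bound is active (rather than the degenerate block of two adjacent vertices of $P$, where the open interval would be empty). The argument is then structurally parallel to the proof of Lemma~\ref{lemma:projectcompression}, but with the roles of ``anchor'' and ``interior vertex'' swapped.

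Concretely, I would first invoke that third bullet point on the nonempty open interval to conclude $d(p_{\pieps(i)}, p_x) < \beta\rho$ and $d(p_x, p_{\pieps(i+1)}) < \beta\rho$. From $\Meps[x, j] = -1$ I read off $\d(p_x, q_j) \le (1+\alpha)\rho$, and since the oracle never underestimates, also $d(p_x, q_j) \le (1+\alpha)\rho$. The triangle inequality in $G$ then gives $d(p_{\pieps(i)}, q_j) \le d(p_{\pieps(i)}, p_x) + d(p_x, q_j) < (1+\alpha+\beta)\rho$, and applying the $(1+\alpha)$-stretch bound yields $\d(p_{\pieps(i)}, q_j) \le (1+\alpha)(1+\alpha+\beta)\rho$. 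By the definition of $\Meps$ this forbids the value $1$, so $\Meps[\pieps(i), j] \le 1$; the same argument verbatim, using $d(p_x, p_{\pieps(i+1)}) < \beta\rho$ in place of $d(p_{\pieps(i)}, p_x) < \beta\rho$, settles $\Meps[\pieps(i+1), j] \le 1$.

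The main obstacle is bookkeeping rather than depth: one has to verify that the strict-interior hypothesis $x \in (\pieps(i), \pieps(i+1))$ genuinely lands in the regime in which the $\beta\rho$ bound from the third bullet is available, and in particular that the triangle-inequality step is carried out in the \textbf{exact} shortest-path metric $d$ (where the $<\beta\rho$ bound lives) before the $(1+\alpha)$ overhead from the oracle is applied -- if the order is reversed, one loses an extra factor of $(1+\alpha)$ on the $\beta\rho$ term and the bound degrades. After that, everything reduces to a single application of the triangle inequality composed with the oracle's one-sided stretch, symmetrised over the two endpoints $\pieps(i)$ and $\pieps(i+1)$.
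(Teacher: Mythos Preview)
Your proof is correct and takes essentially the same approach as the paper's own proof, which simply says ``As in Lemma~\ref{lemma:projectcompression}, $d(p_x, p_{\pieps(i)}) \leq \beta\rho$ and $d(p_x, p_{\pieps(i+1)}) \leq \beta\rho$ implies the lemma.'' You have unpacked exactly that one-line pointer: the $\beta\rho$ bound from the compression, the triangle inequality in $d$, and the oracle's $(1+\alpha)$ stretch applied afterwards, with the observation that the open-interval hypothesis rules out the degenerate consecutive-vertex case.
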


\begin{proof}
As in Lemma~\ref{lemma:projectcompression}, $d( p_x, p_{\pieps(i)} ) \leq \beta \rho$ and $d( p_x, p_{\pieps(i + 1)} ) \leq \beta \rho$ implies the lemma.
\end{proof}

\begin{figure}[b]
  \centering
  \includegraphics{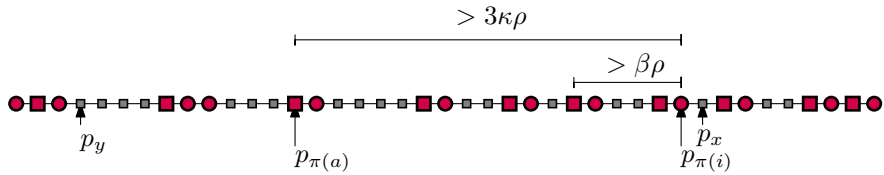}
  \caption{
   A schematic representation of $P^\beta$. 
   For any $i$ as in Lemma~\ref{lemma:windowsize}, we consider an integer $a  = i -  \lceil \frac{9\kappa}{\beta} \rceil$ and some $p_y$ preceding $p_{\pieps(a)}$.
  }
  \label{fig:windowsize}
\end{figure}

\begin{lemma}
\label{lemma:windowsize}
For any $j$, let $i$ be an integer such that
there exists an $x \in [\pieps(i), \pieps(i+1)]$ with $\Meps[x, j] = -1$. Denote $a = i -  \lceil \frac{9\kappa}{\beta} \rceil$ and $b = i +  \lceil \frac{9 \kappa}{\beta} \rceil$.
There can be no integer $y \not\in [\pieps(a), \pieps(b)]$ such that $\Meps[y, j] = -1$. \end{lemma}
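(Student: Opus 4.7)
The plan is to argue by contradiction: suppose some $y\notin[\pi(a),\pi(b)]$ satisfies $\Meps[y,j]=-1$, and without loss of generality take $y>\pi(b)$ (the case $y<\pi(a)$ is symmetric). I will then squeeze the length of the subpath $P[p_x,p_y]$ from both sides. From above, the hypothesis that both $p_x$ and $p_y$ are close to $q_j$, combined with $\kappa$-straightness, forces this subpath to be short; from below, the structural property of $P^\beta$ forces it to be long. The separation constant $\lceil 9\kappa/\beta\rceil$ is chosen precisely so the lower bound strictly exceeds the upper bound.

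For the upper bound, I would first translate the matrix entries back into distances in $G$. Since $\Meps[x,j]=\Meps[y,j]=-1$, we have $\d(p_x,q_j),\d(p_y,q_j)\le(1+\alpha)\rho$, and because the true distance is never larger than the perceived distance, $d(p_x,q_j),d(p_y,q_j)\le(1+\alpha)\rho$. The triangle inequality in $G$ then gives $d(p_x,p_y)\le 2(1+\alpha)\rho$, and applying $\kappa$-straightness of $P$ yields that the length of $P[p_x,p_y]$ is at most $2\kappa(1+\alpha)\rho$.

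For the lower bound, I would use that $x\le\pi(i+1)$ and $y>\pi(b)$, so $P[p_{\pi(i+1)},p_{\pi(b)}]\subseteq P[p_x,p_y]$. This subpath traverses $\lceil 9\kappa/\beta\rceil - 1$ consecutive $\pi$-indices, and I would partition these jumps into disjoint triples $\pi(k)\to\pi(k+3)$. The second bulleted property of $P^\beta$ guarantees that each triple contributes more than $\beta\rho$ to the length, giving a total length of roughly $3\kappa\rho$, which is strictly more than $2\kappa(1+\alpha)\rho$ whenever $\alpha$ and $\beta$ are reasonably small. This contradicts the upper bound and completes the proof.

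The main obstacle I expect is constant bookkeeping. Because $P^\beta$ alternates greedy vertices with their $P$-predecessors, only every third $\pi$-jump is guaranteed to advance along $P$ by at least $\beta\rho$; this factor of $3$ is exactly why the constant $9$ appears in the window size $\lceil 9\kappa/\beta\rceil$. I would need to check that the floors and ceilings in $\lfloor(\lceil 9\kappa/\beta\rceil-1)/3\rfloor$ do not erode the margin, and to make explicit any mild assumption on $\alpha,\beta$ under which the strict inequality $3\kappa\rho - O(\beta\rho) > 2\kappa(1+\alpha)\rho$ actually holds.
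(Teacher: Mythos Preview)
Your proposal is correct and follows essentially the same approach as the paper: contradiction via squeezing the length of the subpath between $p_x$ and $p_y$, with the upper bound $2\kappa(1+\alpha)\rho$ coming from the triangle inequality through $q_j$ plus $\kappa$-straightness, and the lower bound $>3\kappa\rho$ coming from counting disjoint $\pi$-triples, each contributing more than $\beta\rho$. The paper resolves your bookkeeping concern simply by invoking the standing assumption $\alpha<0.5$, under which $3\kappa\rho>2\kappa(1+\alpha)\rho$ holds outright, so no additional smallness condition on $\beta$ is needed.
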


\begin{proof}
For all $i$, the length of $P[p_{\pieps(i)}, p_{\pieps(i+3)}]$ is greater than $\beta \rho$.
It follows that the length of the subpath $P[p_{\pieps(a)}, p_{x}]$ is more than: 
$\sum_{t = 1}^{ \lceil 3\kappa /\beta \rceil } \beta \rho = \frac{3 \kappa}{\beta} \beta \rho = 3 \kappa \rho$ (Figure~\ref{fig:windowsize}).
Suppose for the sake of contradiction that there exists an integer $y < \pieps(a)$ such that $\d(p_y, q_j) \leq (1 + \alpha)\rho$.
Then the exact distance  $d(p_y, p_x)$ is at most $2 (1 + \alpha) \rho$ through traversing from $p_y$ to $q_j$ to $p_x$.

However, the subpath $P[p_y, p_x]$ is longer than $P[p_{\pieps(a)}, p_{x}]$ and thus longer than  $3 \kappa \rho$. 
For $\alpha < 0.5$, this contradicts the assumption that $P$ is $\kappa$-straight.

A symmetrical argument holds for $y > \pieps(b)$. 
\end{proof}

\subparagraph{Defining $\beta$-windows.}
Now, we use two lattices: $[n] \times [m]$ and the smaller lattice $[n'] \times [m]$.
Points on the first lattice will be denoted by $(x, j)$ and $(y, j)$.
Points on the second lattice will be denoted by $(i, j)$ or $(a, j)$ or $(b, j)$.
Intuitively, Lemma~\ref{lemma:windowsize} shows for every integer $j$ a  `horizontal window' in $[n'] \times [m]$ (of width  $O(\frac{\kappa}{\beta})$) that bounds the subpath of $P$ of vertices that \emph{may} have perceived distance fewer than $(1+ \alpha)\rho$ to the vertex $q_j \in Q$. 
We formalise this intuition by defining $\beta$-windows (see Figure~\ref{fig:epsilonwindows}):

\begin{itemize}
    \item Let for an index $j$, $p_x$ be any vertex in $P$ with minimal distance to $q_j$ in the graph $G$.
    \item Let $i$ be the integer such that $p_{\pieps(i)}$ is the point in $P^\beta$ that precedes $p_x$. 
    \item We distinguish two cases:
    \begin{enumerate}
    \item If the exact distance $d(p_x, q_j) > \rho$ then: $W_j$ is empty. 
    \item  Otherwise: $W_j = [i - \lceil \frac{9\kappa}{\beta} \rceil, i + \lceil \frac{9 \kappa}{\beta} \rceil ] \times \{ j \} \subset [n'] \times [m]$.
    \end{enumerate}
\end{itemize}

\subparagraph{The high-level approach.}
We first construct the Voronoi diagram of $P$ in $G$ in $O(N \log N)$ time.
For every $q_j \in Q$, we obtain from the diagram the vertex $p_x \in P$ that is closest to $q_j$ and the \emph{exact} distance $d(p_i, q_j)$ in $O(1)$ time.
With $q_j$, we construct $W_j$ in $O(\frac{\kappa}{\beta})$ time.
For every point $(a, j) \in W_j$ we compute $d(p_{\pieps(a)}, j)$ in $O(
\frac{1}{\alpha})$ time.
Any lattice walk that realises a distance $\FD(P, Q) \leq (1 + \alpha)(1 + \alpha + \beta) \rho$ must be contained in the grid: $A = \cup_j \, W_j$ which has $O(m \cdot \frac{\kappa}{\beta})$ complexity.
We  compute a minimal cost path in time linear in the size of $A$.

\begin{figure}[h]
  \centering
  \includegraphics{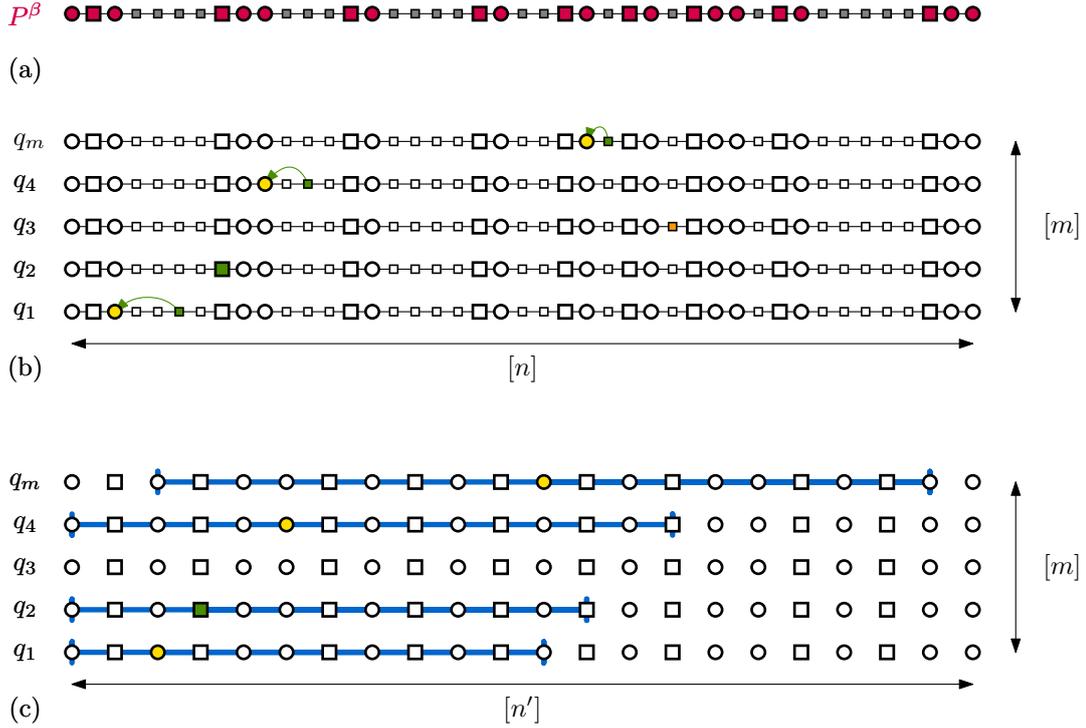}
  \caption{
  (a) a schematic representation of a path $P$ with $P^\beta$ in red.
  (b) For every $j \in [m]$, we observe the closest point $p_x$. If $d(p_x, q_j) \leq \rho$ we color it green.   Otherwise, we color it orange. In addition, if $p_x \not \in P^\beta$ we color its predecessor in $P^\beta$ yellow.
  (c) For every yellow or green vertex in $[n'] \times [m]$, we create a horizontal window in blue. We show the window for $\kappa = \beta = 1$.
  }
  \label{fig:epsilonwindows}
\end{figure}

\newpage
\begin{theorem}
\label{thm:epsdecision}
Let $G$ be a planar graph with $N$ vertices, $P = (p_1, \ldots, p_n )$ a $\kappa$-straight path and $Q = (q_1, \ldots ,q_{m})$ be any walk in $G$.
Given a value $\rho \in \R$ and some $\beta$ and $\alpha\leq 0.5$, we correctly conclude  either $\FD(P, Q)  >  \rho$ or $\FD(P, Q) \leq (1 + \alpha)(1 + \alpha + \beta) \rho$ in  $O(N \log N / \alpha + n + \frac{\kappa}{\alpha \beta} m))$ time using $O(N \log N / \alpha)$ space. 
\end{theorem}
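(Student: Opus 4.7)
I will follow the outline preceding the theorem: preprocess, carve out the small grid $A$, and run a dynamic program on it. Preprocessing builds Thorup's $(1+\alpha)$-stretch oracle in $O(N\log N/\alpha)$ time and space (with $O(1/\alpha)$ query time for $\d$) and the Voronoi diagram of $P$ in $G$ in $O(N\log N)$ time via Appendix~\ref{appx:voronoi}; a single left-to-right scan of $P$ produces $P^\beta$ and the index map $\pi$ in $O(n)$ extra time by greedily emitting a new vertex each time the accumulated edge-weight since the previous emission reaches $\beta\rho$. For every $j\in[m]$ I then issue one Voronoi query to find the closest $p_x\in P$ to $q_j$ together with the exact $d(p_x,q_j)$, set $W_j=\emptyset$ if this exceeds $\rho$, and otherwise read off $i^*(j)$ from $\pi$, emit the $O(\kappa/\beta)$-cell window $W_j$, and query $\d(p_{\pi(a)},q_j)$ on each cell in $O(1/\alpha)$. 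A row-by-row dynamic program on $A=\bigcup_j W_j$ then searches for an $xy$-monotone walk $F'$ from $(1,1)$ to $(n',m)$ such that $\Meps[\pi(a),j]\leq 0$ on every cell; success returns $\FD(P,Q)\leq(1+\alpha)(1+\alpha+\beta)\rho$, failure returns $\FD(P,Q)>\rho$. The running time sums to $O(N\log N/\alpha+n+\kappa m/(\alpha\beta))$ and the space to $O(N\log N/\alpha)$, matching the theorem.

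For the forward direction, assume $\FD(P,Q)\leq\rho$ and let $F$ be an $xy$-monotone witness; every cell has $d\leq\rho$, hence $\d\leq(1+\alpha)\rho$ and $\Meps[i,j]=-1$. I project $F$ to $F'\subseteq[n']\times[m]$ by sending each $(i,j)\in F$ to $(a,j)$ with $\pi(a)\leq i<\pi(a+1)$ and compressing consecutive duplicates; the result is an $xy$-monotone walk from $(1,1)$ to $(n',m)$ by monotonicity of $\pi$ and $F$. Every cell satisfies $\Meps[\pi(a),j]\leq 0$ by either direct inheritance (if $i=\pi(a)$) or by Lemma~\ref{lemma:interiormistake}. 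Membership $F'\subseteq A$ follows from Lemma~\ref{lemma:windowsize}: for every row $j$ visited by $F$, $d(p_i,q_j)\leq\rho$ forces $d(p_x,q_j)\leq\rho$ for the closest $p_x$, so $W_j\neq\emptyset$ with $\Meps[x,j]=-1$, and the lemma then confines every other $-1$ index (in particular $i$) to the $\pi$-image of the window centred at $i^*(j)$; monotonicity of $\pi$ transfers this to $a\in W_j$.

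For the reverse direction, given $F'$ returned by the DP I lift it to an $xy$-monotone walk $F$ in $[n]\times[m]$ by replacing each horizontal or diagonal move $(a,j)\to(a+1,j')$ with the sequence $(\pi(a),j)\to(\pi(a)+1,j')\to\cdots\to(\pi(a+1),j')$ and leaving pure vertical moves unchanged. Cells inherited from $F'$ already satisfy $d(p_{\pi(a)},q_j)\leq\d(p_{\pi(a)},q_j)\leq(1+\alpha)(1+\alpha+\beta)\rho$, and for each intermediate $(x,j')$ the compression bound $d(p_x,p_{\pi(a+1)})<\beta\rho$ combined with Lemma~\ref{lemma:projectcompression} at the destination endpoint $(\pi(a+1),j')$ gives the same upper bound on $d(p_x,q_{j'})$. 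The main technical obstacle I anticipate is verifying this lift uniformly for both horizontal and diagonal moves; the key trick is to run the horizontal portion of the lift along the destination row $j'$ rather than the source row $j$, so that the $\Meps$-bound we invoke is guaranteed by $(a+1,j')\in F'$. With this in place, $F$ certifies $\FD(P,Q)\leq(1+\alpha)(1+\alpha+\beta)\rho$, completing the correctness argument.
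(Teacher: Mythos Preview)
Your dynamic program is too permissive, and this breaks your reverse direction. You search for an $xy$-monotone walk $F'$ in $A$ with $\Meps[\pi(a),j]\leq 0$ on \emph{every} cell, and then in the lift you invoke Lemma~\ref{lemma:projectcompression} at the destination $(\pi(a+1),j')$ of each horizontal/diagonal step. But that lemma's hypothesis is $\Meps[\pi(a+1),j']=-1$, i.e.\ $\d(p_{\pi(a+1)},q_{j'})\leq(1+\alpha)\rho$; your DP only guarantees $\leq 0$, i.e.\ $\d\leq(1+\alpha)(1+\alpha+\beta)\rho$. If the destination happens to have value $0$, the triangle inequality via $p_{\pi(a+1)}$ gives for intermediate $x$ only
\[
d(p_x,q_{j'}) \;\leq\; d(p_x,p_{\pi(a+1)}) + d(p_{\pi(a+1)},q_{j'}) \;<\; \beta\rho + (1+\alpha)(1+\alpha+\beta)\rho,
\]
which overshoots the claimed bound by $\beta\rho$. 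So the lifted walk need not certify $\FD(P,Q)\leq(1+\alpha)(1+\alpha+\beta)\rho$.

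The fix is exactly the asymmetric edge rule the paper uses: allow a vertical edge whenever both endpoints have $\Meps<1$, but allow a horizontal or diagonal edge only when the \emph{destination} has $\Meps=-1$. With this rule the lift goes through directly via Lemma~\ref{lemma:projectcompression}. You then also have to strengthen your forward direction accordingly: when you project a witness $F$ to $F'$, observe that every time $F'$ advances a column it is because $F$ has just stepped onto the exact index $\pi(a+1)$ (since $F$ increments its first coordinate by one), so $(\pi(a+1),j')\in F$ and hence $\Meps[\pi(a+1),j']=-1$; this shows $F'$ obeys the stricter edge rule and is found by the DP. Everything else in your plan (preprocessing, window construction, membership $F'\subseteq A$ via Lemma~\ref{lemma:windowsize}, running time) matches the paper and is fine.
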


\begin{proof}
We construct the approximate distance oracle $\dist$ using $O(N \log N / \alpha)$ time and space.
Given $P$ and $Q$, we construct the $\beta$-compressed path $P^\beta$ in $O(n)$ time.
We supply every point in $P \backslash P^\beta$ with a pointer to the point in $P^\beta$ that precedes it.
We construct the Voronoi diagram of $P$ in the graph $G$ in $O(N \log N)$ time.
Given $P^\beta$, we construct for every integer $j \in [m]$ the window $W_j$ in $O(\frac{\kappa}{\beta})$ time.
Specifically, for any point $q_j$ we obtain the point $p_x$ that is closest to $q_j$.  
If $d(p_x, q_j) \leq \rho$ then we obtain the point $p_{\pieps(i)}$ in $P^\beta$ that precedes $p_x$ in constant time through the pre-stored pointer and we set:
$W_j = [i - \lceil \frac{9\kappa}{\beta} \rceil, i + \lceil \frac{9 \kappa}{\beta} \rceil ] \times \{ j \}$.

The union of windows  ($A  = \cup_j W_j)$ is a grid in $[n'] \times [m]$ of at most $O(m \cdot \frac{ \kappa}{\beta})$ lattice points. For each $(a, j) \in A$ we query $\dist$ in $O(\frac{1}{\alpha})$ time to determine the value $\Meps[ \pieps(a), j]$ in $O(m\frac{\kappa }{\alpha \beta})$ total time.
Given this grid, we construct a directed grid graph where there is:
\begin{itemize}
    \item a vertical edge from $(a, j)$ to  $(a, j + 1)$ if $\Meps[\pieps(a), j] < 1$ and $\Meps[\pieps(a), j + 1] < 1$, 
    \item a horizontal edge from $(a, j)$ to $(a+1, j)$ if $\Meps[\pieps(a), j] < 1$ and $\Meps[\pieps(a+1), j] = -1$,
    \item diagonal edge from $(a, j)$ to $(a+1, j + 1)$ if $\Meps[\pieps(a), j] < 1$ and $\Meps[\pieps(a+1), j+1] = -1$.
\end{itemize} 
\noindent
We can determine if there exists a path in $A$ from  $(1, 1)$ to $(n', m)$ in $O(\frac{m \kappa}{\beta})$ time.

\subparagraph{If such a path $F^*$ exists,} we claim that $\FD(P, Q) \leq (1 + \alpha)(1 + \alpha + \beta)\rho$. 
Indeed, we transform $F^*$ into a path over $[n] \times [m]$ as follows: for all  $(a, j) \in F^*$ we add $(\pieps(a), j)$.
Note that per construction of the grid graph, for all points in $F^*$ it must be that $\Meps[\pieps(a), j] < 1$ and thus $\d(\pieps(a), j) \leq (1 + \alpha)(1 + \alpha + \beta) \rho$.
For every two consecutive points $(a, j)$, $(a+1, j')$ in $F^*$, per construction, $\Meps[\pieps(a + 1), j'] = -1$.
We add all points $(x, j')$ with $x \in [\pieps(a), \pieps(b)]$. By Lemma~\ref{lemma:projectcompression}, for all these points $(x, j')$ it must be that $\Meps[x, j'] < 1$. 
Thus, we found a walk $F$ from $(1, 1)$ to $(n, m)$ where for every $(i, j) \in F$, $\Meps[i, j] < 1$ and the \frechet distance between $P$ and $Q$ is at most $(1 + \alpha)(1 + \alpha + \beta) \rho$. 

\subparagraph{If no such path $F^*$ exists,}  we claim that $\FD(P, Q) > \rho$. 
Suppose for the sake of contradiction that $\FD(P, Q) \leq \rho$ then there exists an $xy$-monotone path $F$ from $(1, 1)$ to $(n, m)$ where for all $(i, j) \in F$, $d(p_i, q_j) \leq \rho$.
We use $F$ to construct a path $F^*$ from $(1, 1)$ to $(n', m)$ in our grid graph. 
Specifically, for every element $(x, j) \in F$ we check if $p_x$ has been removed during compression. 
\begin{itemize}
    \item If $p_x$ has an equivalent in $P^\beta$ then there exists an integer $a$ such that $p_{\pieps(a)} = p_x$ and we add the lattice point $(a, j) \in [n'] \times [m]$ to $F^*$.
    Per definition of $F$, $\Meps[\pieps(a), j] = -1$.
    \item Otherwise, we identify the index $i$ such that $\pi(i)$ is the vertex of $P^\beta$ preceding $p_x$ and we add the point $(i, j) \in [n'] \times [m]$ to $F^*$. By Lemma~\ref{lemma:interiormistake}, $\Meps[ \pieps(i), j] < 1$.
\end{itemize}
Since $F$ is a connected $xy$-monotone path from $(1, 1)$ to $(n, m)$, we obtain an $xy$-monotone path $F^*$ from $(1, 1)$ to $(n', m)$. 
Moreover, whenever this path traverses a horizontal or diagonal edge to a point $(a, j)$ it must be that $(\pieps(a), j) \in F$ and thus $\Meps[\pieps(a), j] = -1$. 
Thus, $F^*$ is a path from $(1, 1)$ to $(n', m)$ in our grid graph which contradicts the earlier assumption that no such path exists.
\end{proof}
\noindent
This corollary follows immediately from choosing $\alpha = \beta = 0.25 (\sqrt{8 \eps + 9} - 3)$ 

\begin{corollary}
Let $G$ be a planar graph with $N$ vertices, $P = (p_1, \ldots, p_n )$ a $\kappa$-straight path and $Q = (q_1, \ldots ,q_{m})$ be any walk in $G$.
Given a value $\rho \in \R$ and some $\eps > 0$ we correctly conclude  either $\FD(P, Q)  >  \rho$ or $\FD(P, Q) \leq (1 + \eps) \rho$ in  $O(N \log N / \sqrt{\eps} + n + \frac{\kappa}{\eps } m))$ time. 
\end{corollary}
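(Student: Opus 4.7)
The plan is to apply Theorem~\ref{thm:epsdecision} with the hinted choice $\alpha = \beta = c$ where $c := 0.25(\sqrt{8\eps+9}-3)$, and check two things: (i) the approximation factor $(1+\alpha)(1+\alpha+\beta)$ coming out of the theorem collapses to exactly $1+\eps$, and (ii) the time bound $O(N\log N/\alpha + n + \tfrac{\kappa}{\alpha\beta}m)$ collapses to the claimed $O(N\log N/\sqrt{\eps} + n + \tfrac{\kappa}{\eps}m)$.

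For (i), I would just expand $(1+\alpha)(1+\alpha+\beta) = (1+c)(1+2c) = 1 + 3c + 2c^2$ and note that $c$ is, by construction, the positive root of the quadratic $2c^2 + 3c - \eps = 0$, so $(1+\alpha)(1+\alpha+\beta) = 1+\eps$. I also need to check that the preconditions of Theorem~\ref{thm:epsdecision} are satisfied, in particular $\alpha \leq 0.5$. Since $c \leq 0.5 \iff \sqrt{8\eps+9} \leq 5 \iff \eps \leq 2$, this is automatic for $\eps \leq 2$; for $\eps > 2$ I would simply invoke the theorem directly with $\alpha = \beta = 1/2$, which already yields approximation factor $(3/2)(2) = 3 \leq 1+\eps$ and a running time that is constant in $\eps$.

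For (ii), substituting $\alpha = \beta = c$ into the theorem's running time gives $O(N\log N/c + n + \tfrac{\kappa}{c^2}m)$, and the remaining task is to express $1/c$ and $1/c^2$ as functions of $\eps$. The clean trick is to rationalise the numerator:
\[
c \;=\; \frac{\sqrt{8\eps+9} - 3}{4} \;=\; \frac{2\eps}{\sqrt{8\eps+9} + 3},
\]
which puts $c$'s $\eps$-dependence in closed form. From this I can read off $1/c = O(1/\sqrt{\eps})$ and $1/c^2 = O(1/\eps)$, which when plugged into the preceding expression give the stated bound. Finally, I would invoke the doubling/parametric-search wrapper implicit elsewhere in the paper (used already in the $(\kappa+1)$-approximation corollary) to upgrade the decision procedure into an approximation of $\FD(P,Q)$ itself.

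The only step I would expect to require any care is the $\eps$-regime bookkeeping in (ii): the closed form for $c$ behaves qualitatively differently for small and large $\eps$, so one must be slightly careful to produce a single unified bound in $\sqrt{\eps}$ and $\eps$. This is routine algebra rather than a real obstacle, which matches the paper's claim that the corollary ``follows immediately''.
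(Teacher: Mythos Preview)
Your approach is exactly the paper's: the entire proof there is the single line ``follows immediately from choosing $\alpha=\beta=0.25(\sqrt{8\eps+9}-3)$''. Your verification of the approximation factor in (i) is correct, and your check of the precondition $\alpha\le 0.5$ is a welcome addition the paper omits.

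There is, however, a real gap in step (ii). Your rationalised form $c=\dfrac{2\eps}{\sqrt{8\eps+9}+3}$ is correct, but it does \emph{not} yield $1/c=O(1/\sqrt{\eps})$. For bounded $\eps$ (and in particular as $\eps\to 0$, the only interesting regime) the denominator tends to $6$, so $c=\Theta(\eps)$ and hence $1/c=\Theta(1/\eps)$ and $1/c^{2}=\Theta(1/\eps^{2})$. The asymptotics you quote are valid only as $\eps\to\infty$, which is precisely the regime you already disposed of separately. Plugging the correct small-$\eps$ asymptotics into Theorem~\ref{thm:epsdecision} gives $O(N\log N/\eps + n + \kappa m/\eps^{2})$, not the bound stated in the corollary. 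Nor can this be repaired by a cleverer split between $\alpha$ and $\beta$: with $\beta\ge 0$ the constraint $(1+\alpha)(1+\alpha+\beta)=1+\eps$ forces $\alpha\le\sqrt{1+\eps}-1=\Theta(\eps)$, so $1/\alpha=\Omega(1/\eps)$ necessarily. In short, what you flagged as ``routine algebra rather than a real obstacle'' is in fact where the claimed bound fails; the discrepancy lies in the corollary statement itself, not in your reduction to Theorem~\ref{thm:epsdecision}.

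A minor point: the corollary as stated is still a decision procedure (it takes $\rho$ as input and outputs one of two conclusions), so your final ``doubling/parametric-search wrapper'' step is unnecessary here.
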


\newpage
\section{A conditional lower bound for computing the \frechet distance}
\label{sec:hardness}
We show that for every $\delta > 0$ there is no $O( (nm)^{1 - \delta})$ algorithm for computing for the discrete \frechet distance between two paths in a planar graph (unless OVH fails). 
We show this using a planar graph $G = (V, E)$ where the edges have integer weights in $\{ 0.001, 0.35, 0.6, 0.65, 1, 2, 3 \}$. In Section~\ref{appx:walks} we prove a similar statement for walks in a constant-complexity unit-weight graph. 
Throughout this section, we fix some  $\delta > 0$ and $\gamma > 0$ and consider two  sets $A$ and $B$ of $d$-dimensional Boolean vectors (with $d = \omega \log n$ where the constant $\omega$ depends on $\delta$).
In addition, we assume that $A$ and $B$ contain $n'$ and $m' $ vectors respectively with $n' = (m')^\gamma$.
Using $A$ and $B$, we reduce from Orthogonal Vectors using what we call a \emph{vector gadget}.
We construct a graph $G$ and two paths $P$ and $Q$ where $\FD(P, Q) < 3$ if and only if there exists $(a, b) \in A \times B$ such that $a$ and $b$ are orthogonal.

\subparagraph{Proof notation.}
Throughout this section, we label vertices to represent an equivalence class. 
We construct a planar graph where we label `blue' vertices with a label in $\{ x, y, z, \Bz, \Bo, B \}$ and `red' vertices with a label in $\{ \alpha, \alpha^*, \beta, \beta^*, \gamma, \Az, \Ao, A \}$. 
Ideally, we would construct a graph where for every red-blue pair of labels, all red-blue vertices with those two labels have the same distance.
The graph we construct however has a slightly weaker property.
We construct a graph and consider any red-blue pair of vertices $ b, r$ with $\textsc{label}(b) \in \{ x, y, z, \Bz, \Bo, B \}$  and $\textsc{label}(r) \in \{ \alpha, \alpha^*, \beta, \beta^*, \gamma, \Az, \Ao, A \}$ and demand the following:
if $d(b, r) < 3$ then for all $(b', r')$ with $\textsc{label}(b') = \textsc{label}(b)$ and $\textsc{label}(r') = \textsc{label}(r)$ it must be that $d(b', r') < 3$.

We  construct for every vector in $A$ (and $B$) a vector gadget.
This gadget resembles the gadget used in the conditional lower bound for the \frechet distance in the Euclidean plane by Bringmann~\cite{bringmann2014walking}.
The path $P$ will traverse all vector gadgets of $A$ in sequence (and $Q$ will traverse gadgets of $B$).
We connect all gadgets of $A$ to all gadgets of $B$ via `star' vertices (grey triangles or diamonds). These stars ensure that there can be a matching between every pair of gadgets (vectors). 
Finally, we add `park' vertices (square vertices) which are vertices of $A$ (or $B$) that are close to all vertices of $B$ (or $A$). The intuition is, that during a traversal (reparametrization) of $P$ and $Q$ an entity can remain stationary at a park vertex, whilst the other entity traverses their corresponding path until the appropriate gadgets can be matched.

\subparagraph{Vector gadget.}
We illustrate the vector gadget for vectors $b \in B$ (see Figure~\ref{fig:vectorgadget}). The `core' of this subgraph is vertex $y$ connected to the following construction (repeated $d$ times): there are two \emph{Boolean vertices} $(\Bz, \Bo)$, followed by an \emph{intermediary} vertex $B$. 
This core will allow us to model a $d$-dimensional Boolean vector. 
We connect the core to two park vertices $x$ and $z$ where we add an edge $(x, y)$ and $(B, z)$ of weight $3$.
Finally, we add two star vertices where every vertex $B$, $y$ and $x$ get connected to the top star vertex, and every vertex $x, \Bz, z$ get connected to the bottom star vertex. 
For every vector in $A$, the corresponding vector gadget is nearly identical. 
Most crucially, this subgraph is vertically mirrored and the edges attached to star vertices have different weights.

\subparagraph{From gadgets to a graph.}
Given our instance of OV, we construct $(n + m)$ vector gadgets.
Next, we combine the gadgets (Figure~\ref{fig:pathconstruction}).
We highlight the important steps: all the vector gadgets of $B$ (and $A$) are placed horizontally adjacent to each other.

\noindent
The vertices $\{ s^\downarrow, z, \sigma^\uparrow \}$ get connected via a star vertex in the centre of the graph. 
Each vertex $s^\uparrow$ gets connected to a star vertex at the top of the graph. Each vertex $\sigma^\downarrow$ gets connected to a star vertex at the bottom of the graph.
These two stars get connected via an edge with weight $2$. Given this graph $G$, we say that a red vertex $r$ is \emph{close} to a blue vertex $b$ if $d(r, b) < 3$.
For every blue label, we observe the set of close red labels (Table~\ref{tab:my_label}):

\begin{figure}[t]
  \includegraphics[width = \linewidth]{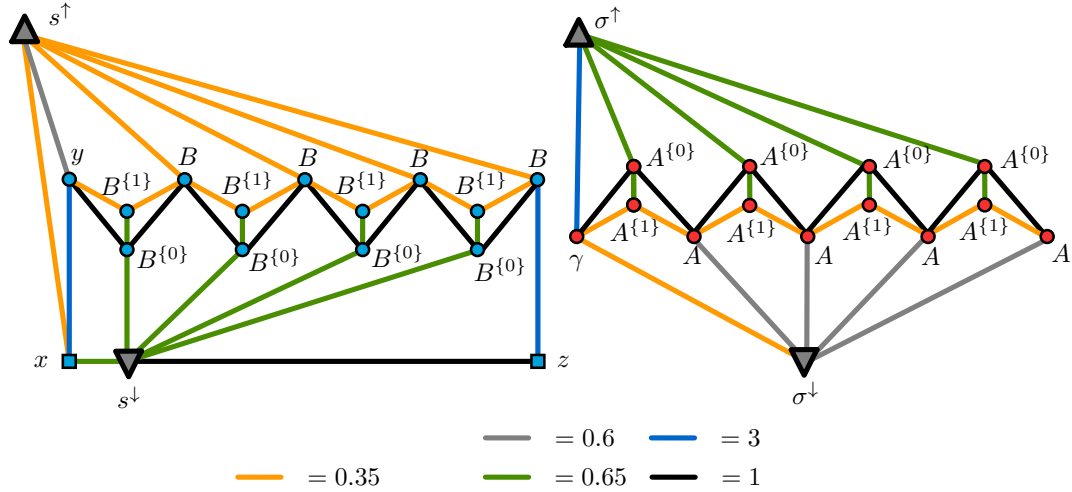}
  \caption{
  The gadgets for vectors in $B$ and in $A$. The path corresponding to $B$ will traverse blue vertices, the path corresponding to $A$ red. 
  }
  \label{fig:vectorgadget}
\end{figure}

\begin{table}[h]
    \centering
    \begin{tabular}{c c c c c c c c c }
\textnormal{dist.} & \redvertex $\alpha$ & \redvertex $\alpha^\ast$ & 
\redvertex $\beta$ & \redvertex $\beta^{\ast}$ & \redvertex $\gamma$ & \redvertex $A^{\{ 0 \} }$ & \redvertex $A^{\{ 1 \} }$ & \redvertex $A$\\
\bluevertex x & \nearvertex .65 & \nearvertex 2.65 & \nearvertex 2.3 & \farvertex 4.3 & \nearvertex 2.702 & \nearvertex 2.301 & \nearvertex  2.951 & \nearvertex  2.952\\
\bluevertex y & \nearvertex 1.6 & \farvertex 3.6 & \nearvertex 2.601 & \farvertex 4.601 & \nearvertex 2.952 & \farvertex 3.251 & \farvertex 3.302 & \farvertex 3.202\\
\bluevertex z & \nearvertex 2.3 & \farvertex 4.3 & \nearvertex .65 & \nearvertex 2.65 & \nearvertex 1.652 & \nearvertex 0.652 & \nearvertex 1.302 & \nearvertex 1.652 \\
\bluevertex $B^{\{ 0 \} }$ & \nearvertex 1.95 & \farvertex 3.95 & \nearvertex \nearvertex 2.3 & \farvertex 4.3 & \farvertex 3.301 & \nearvertex 2.301 & \nearvertex 2.951 & \farvertex 3.301\\
\bluevertex $B^{\{ 1 \} }$ & \nearvertex 1.7 & \farvertex 3.7 & \nearvertex 2.701 & \farvertex 4.701 & \farvertex 3.051 & \nearvertex 2.951 & \farvertex 3.402 & \farvertex 3.302\\
\bluevertex B & \nearvertex 1.35 & \farvertex 3.35 & \nearvertex 2.351 & \farvertex 4.351 & \nearvertex 2.702 & \farvertex 3.001 & \farvertex 3.051 & \nearvertex 2.951\\
    \end{tabular}
    \caption{The shortest distance between vertices with a label in \colorbox{red!20}{$\{ \alpha, \alpha^*, \beta, \beta^*, \gamma, \Az, \Ao, A \}$} and in \colorbox{blue!20}{\{ x, y, z, \Bz, \Bo, B \}}, showing \colorbox{gray!05}{far} and \colorbox{yellow!20}{near} pairs of labels.}
    \label{tab:my_label}
\end{table}

\noindent
\textbf{Constructing the paths $P$ and $Q$.}
Given $G$, $A$ and $B$, we construct a path $P$ consisting of $n = O(n' \cdot d)$ vertices and a path $Q$ consisting of $m = O(m' \cdot d)$ vertices (refer to Figure~\ref{fig:pathconstruction}). The path $P$ starts  in $\alpha$ and then moves to $\alpha^*$.
Then, $P$ traverses every vector gadget of $A$ in sequence. 
Let $v$ be the first vector in $A$. The path $P$ arrives at $y$ and traverses the Boolean vertices and intermediate vertices in an alternating manner (where 
$P$ traverses $\Az$ if the corresponding Boolean in $v$ is false and $\Ao$ if the corresponding Boolean is true).
Having traversed every vector gadget, $P$ moves through $\beta^*$ to $\beta$. The path $Q$ traverses every vector gadget of $B$ in sequence. 
Let a gadget correspond to a vector $v' \in B$:

The path $Q$ starts at the vector $x$ in the gadget and then traverses the Boolean vertices and intermediate vertices in an alternating manner (where $Q$ traverses $\Bz$ if the corresponding Boolean in $v'$ is false and $\Bo$ if the corresponding Boolean is true).
The path $Q$ ends at the vector $z$, and continues to the next gadget.

\begin{figure}[hp]
\centering
  \includegraphics[width=\textwidth]{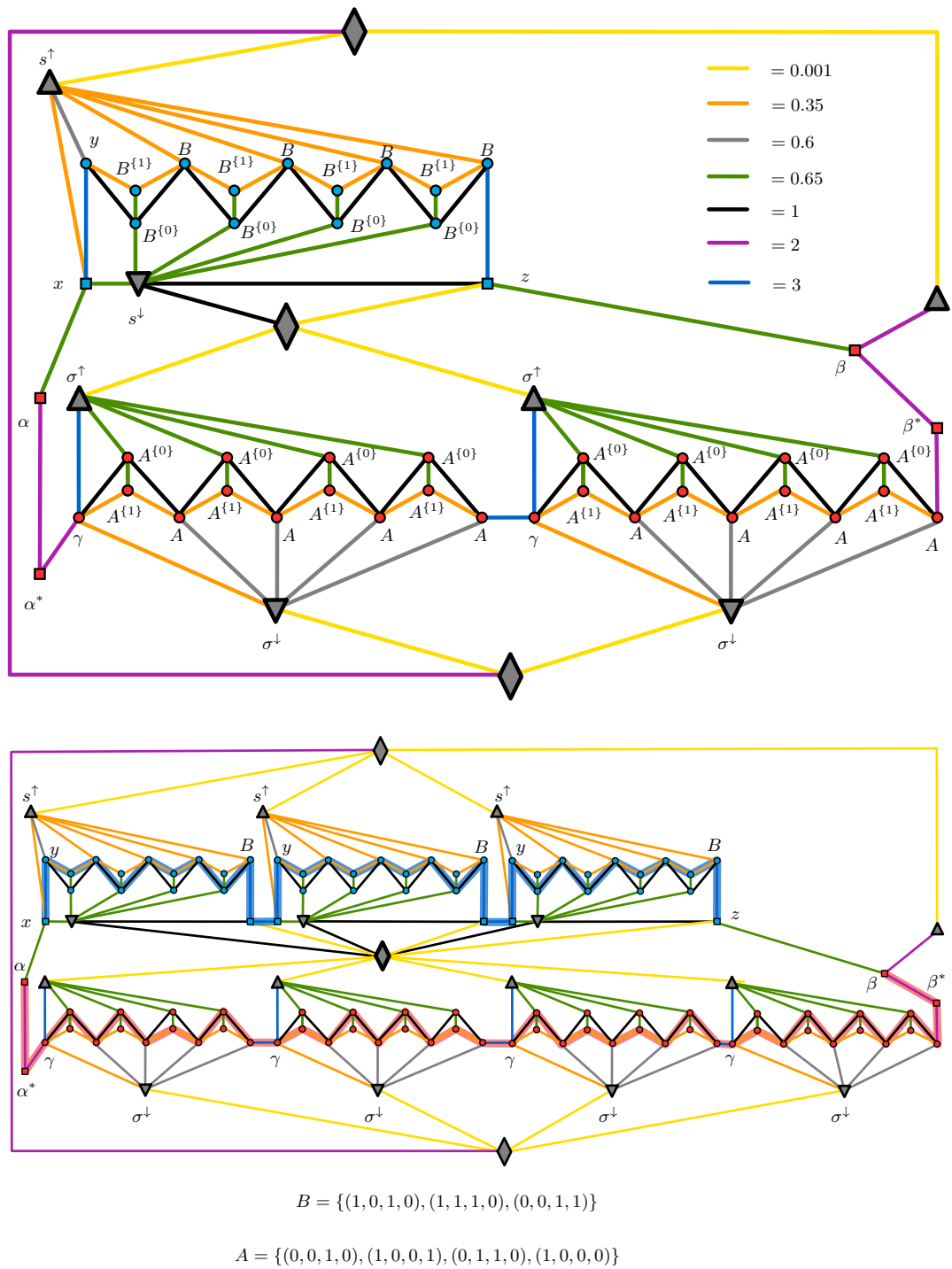}
  \caption{
  Top: we show how pairwise gadgets get connected.
  Bottom: given a set $A$ of four and $B$ of three vectors, we construct the corresponding graph and path.
  }
  \label{fig:pathconstruction}
\end{figure}

\pagebreak 
\begin{theorem}
Let $G$ be a planar, integer-weighted graph, $P$ and $Q$ be two paths in $G$ with $n$ and $m$ vertices and $n = m^\gamma$ for some constant $0 < \gamma \leq 1$. For all $\delta> 0$, there can be no algorithm that computes (a $1.01$-approximation of $\FD(P, Q)$) in  $O( (nm)^{1 - \delta} )$ time.
\end{theorem}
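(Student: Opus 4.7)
The plan is to formalize the OV reduction sketched above and then invoke OVH. Given an OV instance with $|A| = n'$, $|B| = m'$, dimension $d = \omega \log n'$, and $n' = (m')^\gamma$, the construction produces a planar integer-weighted graph $G$ of size $O((n' + m') d)$ and two paths $P, Q$ with $n = \Theta(n' d)$ and $m = \Theta(m' d)$ vertices. The central claim is a gap:
\[
\FD(P, Q) < 3 \ \Longleftrightarrow\ \exists (a, b) \in A \times B \text{ orthogonal},
\]
with the ``no'' side satisfying $\FD(P, Q) > 3.05$, so that even a $1.01$-approximation separates the two cases. A preliminary task is to verify the distance table by finite case analysis on the fixed weight set $\{0.001, 0.35, 0.6, 0.65, 1, 2, 3\}$: show that every ``near'' label pair has distance strictly less than $3$ and every ``far'' pair has distance strictly greater than $3.05$, independent of which specific gadgets the two endpoints inhabit (this is the label-equivalence property asserted in the setup).

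For the ``if'' direction, given orthogonal $a_i \in A$ and $b_j \in B$, I would exhibit an explicit monotone discrete walk $F$ of cost $< 3$. Park $P$ at the park vertex $\alpha$ (near every blue label) while $Q$ marches through gadgets $b_1, \ldots, b_{j-1}$ and arrives at $x_j$; then move $P$ through the synchronization vertex $\alpha^*$ (near only to $x$) while $Q$ waits at $x_j$; advance $P$ through gadgets $a_1, \ldots, a_{i-1}$ with $Q$ still parked at $x_j$ (every red label inside an $a$-gadget is near $x$); at the $i$-th gadget synchronize $P$ and $Q$ coordinate-by-coordinate through the two Boolean lists, which is safe precisely because the only pairwise-far Boolean labels are $(\Ao, \Bo)$, a combination forbidden by orthogonality; symmetrically, let $P$ finish $a_{i+1}, \ldots, a_{n'}$ while $Q$ waits at $z_j$, and finally pass $P$ through $\beta^*$ (near only to $z$) to $\beta$ while $Q$ completes $b_{j+1}, \ldots, b_{m'}$.

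For the ``only if'' direction, suppose a monotone walk $F$ realizes distance $< 3$. The crux is a forcing lemma: whenever $F$ is in a state with $P$ at $\gamma_i$ and $Q$ at $y_j$, the only legal next step is diagonal, because $\gamma$ is far from $\Bz$ and $\Bo$ (ruling out $Q$-only advances) and $y$ is far from $\Az$ and $\Ao$ (ruling out $P$-only advances). From that moment $P$ and $Q$ traverse the Booleans and intermediates of $a_i$ and $b_j$ in lockstep, and a pairwise-far $(\Ao, \Bo)$ cell occurs iff some coordinate has $a_i[k] = b_j[k] = 1$; so orthogonality of $a_i$ and $b_j$ is forced. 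It then remains to show that every valid walk must encounter some such $(\gamma_i, y_j)$. For this I would argue on a quotient of the free-space matrix by labels: $P$ must visit each $\gamma_i$; at those times $Q$ is constrained to one of the few near labels $\{x, y, z, B\}$; and the only compatible way to transition from the $\{\alpha, \alpha^*\}$-corner of the quotient graph to the $\{\beta^*, \beta\}$-corner while $P$ passes through all Booleans $\Az, \Ao$ and $Q$ through all Booleans $\Bz, \Bo$ is to ``hand off'' through a $y_j$-aligned $\gamma_i$ synchronization.

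Step four is the conclusion. Since $d = \omega \log n'$ is polylogarithmic, $nm = \Theta(n' m' d^2)$, so an $O((nm)^{1-\delta})$ algorithm for exactly computing or $1.01$-approximating $\FD(P, Q)$ would solve OV in $O((n' m')^{1-\delta'})$ time for some $\delta' > 0$, contradicting OVH. The main obstacle, as one might expect, is proving the ``only if'' direction: showing that every monotone walk of cost $< 3$ must funnel through some $(\gamma_i, y_j)$ synchronization. I anticipate this requires a careful reachability analysis on the label-quotient of the free-space matrix, leveraging the specific pattern of near/far entries of the table to rule out every route that avoids all such synchronizations.
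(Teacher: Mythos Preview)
Your plan is essentially the same reduction as the paper's, and the ``if'' direction matches almost verbatim. The substantive difference is in the ``only if'' direction, where you anticipate needing a global reachability analysis on the label-quotient of the free-space matrix to force a $(\gamma_i, y_j)$ synchronisation. The paper avoids this entirely with a much shorter local argument, and the trick is worth knowing: it anchors at the single vertex $\alpha^*$. Since $\alpha^*$ is close only to the label $x$, at the moment Red visits $\alpha^*$ Blue must sit at some $x$. Now look at the first subsequent step where Blue leaves that $x$ for $y$ (this must happen, since $Q$ continues). At that step Red is at a vertex close to $y$, hence in $\{\alpha,\beta,\gamma\}$; but $\alpha$ is ruled out by monotonicity (it precedes $\alpha^*$), and $\beta$ is ruled out because $\beta^*$ precedes $\beta$ and $\beta^*$ is far from $x$, so Red has not yet reached $\beta$. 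Hence Red is at some $\gamma$, and the lockstep argument you already describe takes over. No quotient-graph reachability is needed; the two ``starred'' vertices $\alpha^*,\beta^*$ were placed precisely to make this one-line elimination work.

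Two minor points. First, your claimed ``no'' threshold of $3.05$ is slightly off: the distance table has far pairs as small as $3.001$ and near pairs as large as $2.952$, so the actual gap the paper uses is ``$\le 2.96$ versus $\ge 3$''; this still supports a $1.01$-approximation. Second, your forcing lemma (``from $(\gamma_i,y_j)$ the only legal step is diagonal'') is morally right but stated a bit too strongly; the paper simply does a two-case split on which entity moves first and shows the other is forced to follow, which amounts to the same lockstep conclusion.
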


\begin{proof}

For any given $A$ and $B$ of $n'$ and $m'$ vectors, we construct two paths $P$ and $G$ with $n = O(n' \log n')$ and $m  = O(m' \log m')$ vertices respectively.
OVH postulates that there exists no algorithm that can conclude if there exists two orthogonal vectors
 $(a, b) \in A \times B$ in  
$O( (nm)^{1 - \delta} )$ time, for any $\delta > 0$. 
We prove this theorem by showing that there are two such vectors if and only if $\FD(P, Q) < 3$. 
We observe that in our graph for all red/blue vertices $r$ and $b$ either $d(r, b) \leq 2.96$ or $d(r, b) \geq 3$ (which implies this proof for the 1.01-approximation).

We show that if there exist two orthogonal vectors $(a, b) \in A \times B$ then $\FD(P, Q) < 3$.
We construct a traversal of $P$ and $Q$ where  the red entity (henceforth `Red') traversing $P$ remains close to the blue entity (`Blue') traversing $Q$.
First, Red is stationary at the park vertex $\alpha$, whilst Blue traverses $B$ until it reaches the vector gadget corresponding to $b \in B$. 
Then, whilst Blue remains stationary at the park vertex $x$, Red traverses $P$ until it reaches the vector gadget corresponding to $a \in A$. 
At this point, Blue moves to $y$ as Red moves to $\gamma$. 
Both entities simultaneously traverse their vector gadgets. During this traversal (since $a$ and $b$ are orthogonal) the entities remain close. 
Then, Blue remains stationary at $z$, whilst Red traverses the rest of $P$.
Finally, Red remains at $\beta$ whilst Blue traverses the rest of $Q$.

We show that if $\FD(P, Q) < 3$ then there exists a pair of vectors $(a, b) \in A \times B$ such that $a$ and $b$ are orthogonal. 
Indeed, fix any traversal of $P$ and $Q$ that realises the \frechet distance. 
When Red is at $\alpha^*$, Blue must be at some vertex $x$.

Consider now the  time when Blue moves from $x$ to $y$ (where $y$ lies in a gadget corresponding to some vector $b \in B$). 
At this time, Red cannot be at the park vertex $\alpha$ because $\alpha$ precedes $\alpha^*$. 
Similarly, Red cannot be at the park vertex $\beta$ because $\beta^*$ precedes $\beta$ (and $\beta^*$ is not close to $x$). Since $\textsc{close}(y) = \{ \gamma, \alpha, \beta   \}$, it must be that Blue is at some  vertex $\gamma$ (corresponding to some vector $a \in A$). 
Now consider the next time step, when we assume that Red moves to $\{ \Az, \Ao \}$  (the argument for when Blue moves to $\{ \Bz, \Bo \}$ is symmetrical).
If Red moves to $\Az$ then, via the same argument as above, Blue has to simultaneously move to $\Bz$ or $\Bo$. If Red moves to $\Ao$ then Blue must move to $\Bz$. 
For the next time step, via the same argument, both entities must move to $A$ and $B$. 
We can continue this same argument, which shows that the two vectors $a$ and $b$ must be orthogonal. 
\end{proof}

\section{Hardness of walks in constant size graphs}
\label{appx:walks}

We show a conditional lower bound for computing the \frechet distance between two walks in a constant size graph $G$.
We show that there cannot be an algorithm that always correctly computes $\FD(P, Q)$ for two walks $P$ and $Q$ in a graph $G$ unless OVH fails, even if the graph $G$ has unit weight. 
Our proof closely mirrors the construction from Section~\ref{sec:hardness}, but there are some subtle differences which we will highlight whenever they occur.
For any instance of Orthogonal Vectors $A$ and $B$ with $n'$ and $m'$ vectors, we create a constant complexity graph $G$ and two walks $P$ and $Q$ with $n = O(n')$ and $m = O(m')$ vertices. 
We show that there exists a pair of vectors $(a, b) \in A \times B$ that are orthogonal if and only if $\FD(P, Q) \leq 1.9$.
For ease of exposition we first show the conditional lower bound for computing the \frechet distance in a weighted graph $G$. Afterwards, we show how to can extend this argument to a unit weight graph.

\clearpage

\begin{figure}[t]
  \includegraphics[page = 1]{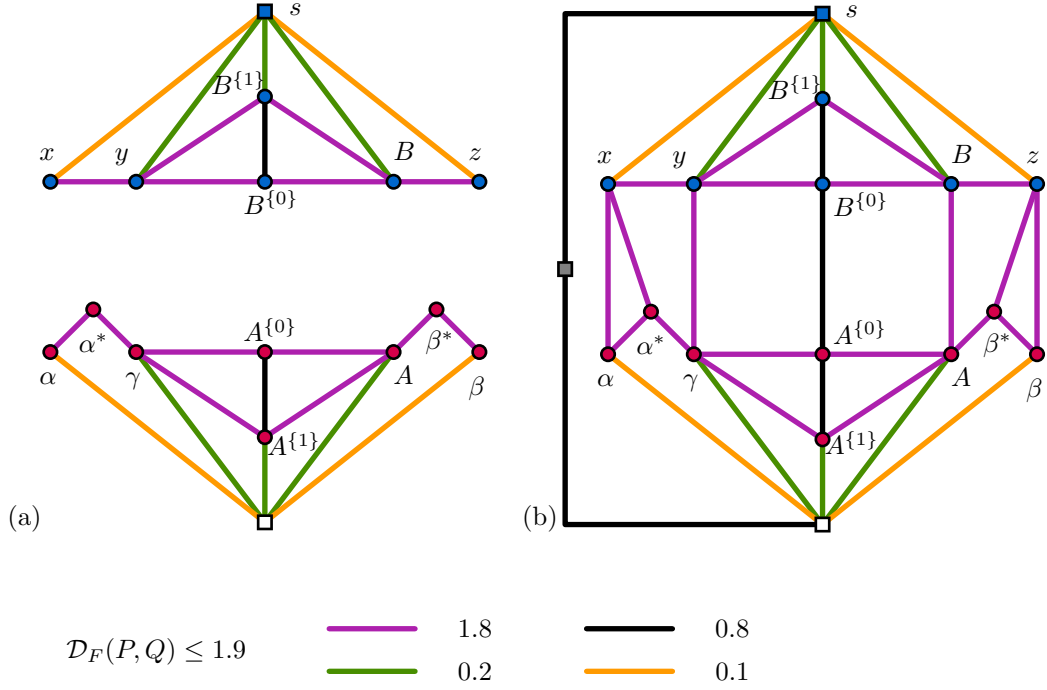}
  \caption{
  (a) The gadget corresponding to vectors in $A$ (red) and vectors in $B$ (blue). 
  (b) We connect the two gadgets with weighted edges.
  }
  \label{fig:walks}
\end{figure}

\subparagraph{Vector gadgets and walks.}
For all instances $A = (a_1, \ldots  a_{n'})$ and $B = (b_1, \ldots b_{m'})$, we assume that the vectors are of even dimension (otherwise, we add one dummy coordinate of $0$ to each vector).
For any instance $A$ and $B$ of Orthogonal Vectors, we start by constructing two constant complexity vector gadgets (Figure~\ref{fig:walks}(a)).
The red gadget contains vertices $\{ \alpha, \alpha^*, \gamma, \Az, \Ao, R, \beta^*, \beta \}$ plus one additional `sink' shown in white.
For all vectors $a \in A$, we create what we call a \emph{subwalk} $f(a)$ through this gadget as follows:
the subwalk starts at $\gamma$ and then continues to either $\Az$ or $\Ao$ depending on the first value in $a$. 
Then, the subwalk continues to $R$ and from there to either $\Az$ or $\Ao$ depending on the second value in $a$. We continue alternating between $R$ and a vertex in $(\Az, \Ao)$ until we reach the end of vector $a$ and this constructs the subwalk $f(a)$.
The walk $f(A)$ is now constructed as follows:
it starts at $\alpha$ and moves to $\alpha^*$. From there, it traverses the first subwalk $f(a_1)$.
This subwalk ends at $R$, at which point the walk traverses through $\Az$ to $\gamma$ to start $f(a_2)$ (this step is different form the step in Section~\ref{sec:hardness}).
The walk ends at $\beta^*$ into $\beta$.
Formally, we denote by $\circ$ the concatenation of two walks and say:
\[
f(A) = \{ \alpha \} \circ \{ \alpha^* \} \circ f(a_1) \circ \left\{ \Az \right\} \circ f(a_2) \circ  \left\{ \Az \right \} \circ f(a_3) \circ \{ \Az \} \circ  \ldots f(a_{n'}) \circ \{\beta^* \} \circ \{ \beta \}.
\]

Similarly, the blue gadget contains vertices $\{  x, y, \Bz, \Bo, z \}$ plus one additional sink denoted by $s$.
For all vectors $b \in B$ we create a subwalk $g(b)$ in a similar fashion.
The subwalk $g(b)$ starts at $y$ and then continues to either $\Bz$ or $\Bo$ depending on the first value in $b$. Then, the subwalk alternates between $B$ and a vertex in $(\Bz, \Bo)$ based on the successive value in $b$. The subwalk $g(b)$ ends at $z$.
The walk $f(B)$ is now constructed as follows:
the walk starts at $x$ and proceeds with the subwalk $g(b_1)$. Then we walk goes from $B$ to $z$, and from $z$ \emph{through} $s$ to $x$ (this step deviates from the argument in Section~\ref{sec:hardness}).
Formally we write:

\[
g(B) = \{ x \} \circ g(b_1) \circ \{ z \} \circ \{ s \} \circ \{ x \} \circ g(b_2) \circ \{ z \} \circ \{ s \} \circ  \ldots 
\{ x \} \circ g(b_{m'}) \circ \{ z \}
\]

\subparagraph{From gadgets to a graph.}
We connect the vector gadgets as shown in Figure~\ref{fig:walks}(b).
We say that a pair of red and blue vertices is \emph{close} whenever their distance is at most $1.9$ and we observe the following:

\begin{table}[h]
    \centering
    \begin{tabular}{c c c c c c c c c }
\textnormal{dist.} & \redvertex $\alpha$ & \redvertex $\alpha^\ast$ & 
\redvertex $\beta$ & \redvertex $\beta^{\ast}$ & \redvertex $\gamma$ & \redvertex $A^{\{ 0 \} }$ & \redvertex $A^{\{ 1 \} }$ & \redvertex $A$\\
\bluevertex x & \nearvertex 1.8 & \nearvertex 1.8 & \nearvertex 1.8 & \farvertex 3.6 & \nearvertex 1.9 & \nearvertex 1.9 & \nearvertex  1.9 & \nearvertex  1.9\\
\bluevertex y & \nearvertex 1.9 & \farvertex 3.6 & \nearvertex 1.9 & \farvertex 3.7 & \nearvertex 1.8 & \farvertex 2 & \farvertex 2 & \farvertex 2\\
\bluevertex z & \nearvertex 1.8 & \farvertex 3.6 & \nearvertex 1.8 & \nearvertex 1.8 & \nearvertex 1.9 & \nearvertex 1.9 & \nearvertex 1.9 & \nearvertex 1.9 \\
\bluevertex $B^{\{ 0 \} }$ & \nearvertex 1.9 & \farvertex 3.7 & \nearvertex \nearvertex 1.9 & \farvertex 3.7 & \farvertex 2 & \nearvertex 0.8 & \nearvertex 1.6 & \farvertex 2\\
\bluevertex $B^{\{ 1 \} }$ & \nearvertex 1.9 & \farvertex 3.7 & \nearvertex 1.9 & \farvertex 3.7 & \farvertex 2 & \nearvertex 1.6 & \farvertex 2 & \farvertex 2 \\
\bluevertex B & \nearvertex 1.9 & \farvertex 3.7 & \nearvertex 1.9 & \farvertex 3.6 & \farvertex 2 & \farvertex 2 & \farvertex 2 & \nearvertex 1.8 \\
    \end{tabular}
    \caption{Pairwise distances, close pairs are marked yellow.}
    \label{tab:walksingraphs}
\end{table}

\subparagraph{Making the graph unit weight.}
As a last step, we show how to make this graph $G$ unit weight, by giving all edges weight $0.1$.
Consider any edge in $G$ that is not traversed by the walk $f(A)$ or $g(B)$. 
Such an edge can easily be replaced by a path of constant length where every edge in that path has weight $0.1$. 
To illustrate our argument, we simply keep showing these edges as weighted edges (Figure~\ref{fig:unitweight}). 
Whenever an edge does get traversed by either $f(A)$ or $g(B)$, it becomes more complicated. 
At that point, we introduce more red and blue vertices and we must be careful to not create any red vertices that are too close, or too far, from any respective blue vertices (or vice versa).

Consider a weighted edge $e$ between two red vertices where one of the vertices is $\Az$. 
For each of these edges, there is a corresponding blue edge $e'$ connected to $\Bz$.
We replace $e$ and $e'$ by a constant size path and we connect these vertices such that they mirror the distance between $\Az$ and $\Bz$, and such that they maintain the same distance to each sink.
We do the same for edges connected  to $\Ao$ and $\Bo$.

The edge $(\alpha, \alpha^*)$ has no equivalent edge in the blue gadget. We replace the edge by a path and connect every vertex on that path to $x$.
Similarly, we connect every vertex on the path replacing $(\alpha^*, \gamma)$ to $x$ and the path replacing $(x, y)$ to $\gamma$. 
We perform a symmetrical replacement on the right side of the graph.

\begin{figure}[t]
  \includegraphics[page = 2]{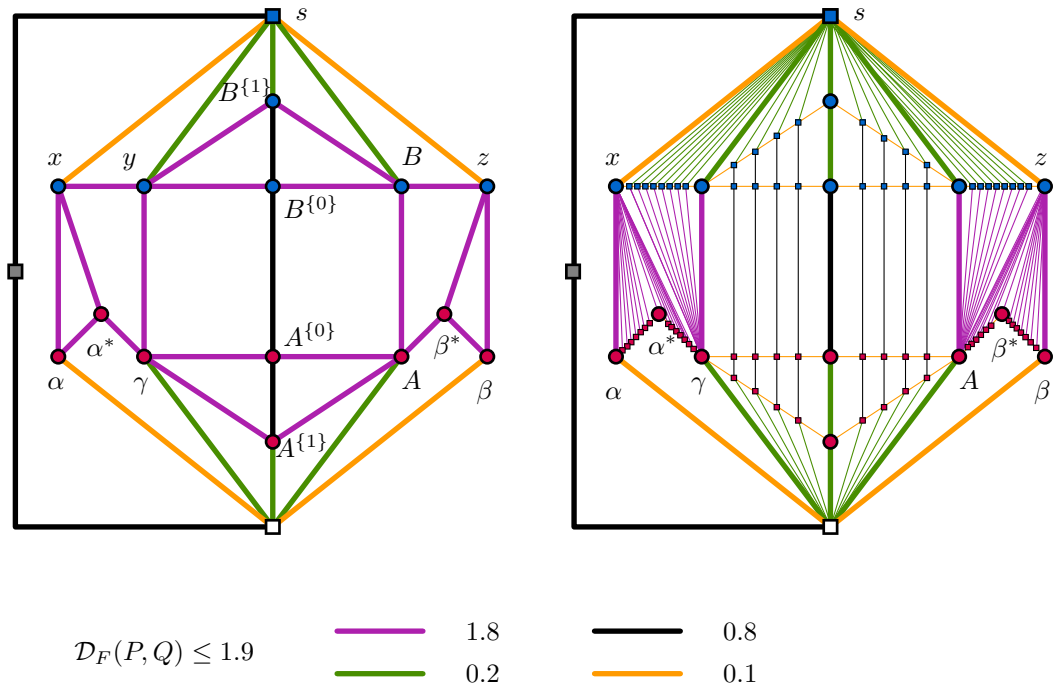}
  \caption{
  (a) The original weighted graph.
  (b) We ensure that all edges have weight $0.2$.  For edges which are not traversed by $f(A)$ or $f(B)$ can be trivially replaced by a path where edges have weight $0.1$. Hence we show those edges as if they still have their same weight. 
  Edges that are traversed by either $f(A)$ or $f(B)$ (that not already have weight $0.1$) are replaced by a path where the vertices of that path are carefully connected to their counterparts.
  }
  \label{fig:unitweight}
\end{figure}

\begin{theorem}
Let $G$ be a planar, unit-weighted graph, $P$ and $Q$ be two walks in $G$ with $n$ and $m$ vertices and $n = m^\gamma$ for some constant $0 < \gamma \leq 1$. For all $\delta> 0$, there can be no algorithm that computes the \frechet distance between $P$ and $Q$ in $O( (nm)^{1 - \delta} )$ time.
\end{theorem}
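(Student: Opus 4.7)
The plan is to adapt the reduction from the previous theorem, with two main modifications: first, the graph will be of constant size and the role previously played by different subgraphs will now be played by \emph{revisits} of the same vector gadget along the walks $f(A)$ and $g(B)$; second, we will remove edge weights by subdividing edges into paths of unit length, taking care that subdivision vertices maintain the ``close/far'' distance structure recorded in Table~\ref{tab:walksingraphs}.

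First I would verify the weighted version. Starting from the graph of Figure~\ref{fig:walks}(b) and the walks $f(A)$, $g(B)$ defined above, I would check Table~\ref{tab:walksingraphs} by a finite case analysis on the constant-size graph, confirming that every red-blue label pair is either at distance $\leq 1.9$ or $\geq 2$. Then I would prove the equivalence $\FD(f(A), g(B)) \leq 1.9 \iff \exists (a,b) \in A \times B$ orthogonal, using the same kind of traversal argument as in Section~\ref{sec:hardness}. In the forward direction: Red parks at $\alpha$ while Blue traverses $g(B)$ until the subwalk $g(b)$ begins; then Blue parks at $x$ while Red advances through $f(A)$ until $f(a)$ begins; during the synchronized traversal of $f(a)$ and $g(b)$, orthogonality guarantees that whenever Red is at $\Ao$, Blue is at $\Bz$ (and symmetrically), so no forbidden $(\Ao,\Bo)$ matching occurs; afterwards Blue parks at $z$ and uses the $z \to s \to x$ detour (precisely why $s$ was inserted) to advance to the next subwalk while Red waits, and symmetrically the $\Az$ steps inserted in $f(A)$ let Red advance between subwalks while Blue waits at a park vertex. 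In the backward direction: I would argue that any traversal with distance $\leq 1.9$ must align each $f(a_i)$ in Red with exactly one $g(b_j)$ in Blue (by looking at when Blue leaves $x$ via $y$ and when Red leaves $\alpha$ via $\alpha^*$, using that $y$ is far from $\alpha^*, \beta^*, \Az, \Ao, A, B, \Bz, \Bo$); within such an aligned pair, the vertices $R$ and $B$ force the subwalks to be traversed in lock-step, and the table entry $d(\Ao,\Bo) = 2$ shows that the pair is orthogonal.

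Next I would carry out the unit-weight reduction sketched around Figure~\ref{fig:unitweight}. For edges not on $f(A)$ or $g(B)$ the replacement is trivial (subdivide into a path of length equal to the old weight, each new edge of weight $0.1$ which we rescale to $1$). For edges on the walks, the subdivision vertices become real walk vertices, so I need to extend the labeling/distance table to include them. The idea is to pair each new red subdivision vertex on an edge incident to $\Az$ with a corresponding new blue subdivision vertex on an edge incident to $\Bz$, keeping their mutual distance and their distance to the sinks identical to $d(\Az,\Bz)$, and symmetrically for $\Ao/\Bo$. For the ``connector'' edges $(\alpha,\alpha^*), (\alpha^*,\gamma), (x,y)$ and their mirrors on the right, each new subdivision vertex is attached to the appropriate park vertex ($x$, or $\gamma$, etc.) so that it preserves the close/far relations of the endpoint it ``replaces'' along the walk. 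The final blow-up multiplies $|P|$ and $|Q|$ only by a constant (since each replaced edge had constant weight), so $n = O(n')$ and $m = O(m')$ still hold, and $G$ remains of constant size (a fixed template whose internal structure does not depend on $n', m'$).

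The main obstacle is the bookkeeping for the unit-weight step: I must verify that after subdivision no unintended close red/blue pair is created, in particular that subdivision vertices sitting ``between'' two gadgets do not become close to vertices in a non-matching gadget. The cleanest way is to argue locally: every new subdivision vertex $v$ lies on a replaced edge $e$, and the shortest path from $v$ to any vertex $w$ outside the immediate neighborhood of $e$ goes through one of the endpoints of $e$, so distances from $v$ inherit from those of the endpoints plus a constant. With this, the close/far dichotomy of Table~\ref{tab:walksingraphs} extends to the subdivided graph. Given the equivalence and the preserved sizes $n = O(n')$, $m = O(m')$ (with $n = m^\gamma$ preserved up to adjusting the constant $\gamma$), an $O((nm)^{1-\delta})$ algorithm for discrete Fr\'echet on walks in $G$ would decide Orthogonal Vectors on $(A,B)$ in time $O((n'm')^{1-\delta})$, contradicting \textnormal{OVH}.
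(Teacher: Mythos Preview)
Your proposal is correct and follows essentially the same approach as the paper's own proof: the same constant-size red/blue gadget pair revisited by the walks $f(A)$ and $g(B)$, the same park-and-traverse argument for the forward direction, the same ``when Blue moves from $x$ to $y$, Red must be at $\gamma$'' forcing argument for the backward direction, and the same edge-subdivision scheme (pairing red subdivision vertices near $\Az,\Ao$ with blue ones near $\Bz,\Bo$, and attaching connector-edge subdivisions to the appropriate park vertex) for the unit-weight step. One small caution: in the backward direction you do not need---and will not be able---to align \emph{every} $f(a_i)$ with some $g(b_j)$; the paper only pins down a single aligned pair by looking at the moment Blue leaves $x$ for $y$ (after Red has already passed $\alpha^*$ and before it can reach $\beta$ past $\beta^*$), which forces Red to sit at $\gamma$ inside one specific $f(a)$, and then the lock-step argument on that single pair yields orthogonality.
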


\begin{proof}
We observe that for any given $A$ and $B$ of $n'$ and $m'$ vectors, we can construct these two walks $P$ and $G$ with $n = O(n')$ and $m  = O(m')$ vertices respectively.
The Orthogonal Vectors Hypothesis states that there exists no algortithm that can conclude if there exists two vectors
 $(a, b) \in A \times B$ such that $a$ and $b$ are orthogonal in faster than 
$O( (nm)^{1 - \delta} )$ time. 
We prove this theorem through showing that there are two such vectors if and only if $\FD(P, Q) \leq 1.9$. 
Any algorithm that can compute $\FD(P, Q)$ faster than in $O( (nm)^{1 - \delta} )$ time, can thus detect if there are two orthogonal vectors in $A$ and $B$ in faster than $O( (n'm')^{1 - \delta} )$ time which contradicts the Orthogonal Vectors Hypothesis.

\subparagraph{Two orthogonal vectors imply $\FD(P, Q) \leq 1.9$.}
First, we show that if there exist two vectors $(a, b) \in A \times B$ such that $a$ and $b$ are orthogonal, then $\FD(P, Q) \leq 1.9$.
We construct a traversal (reparametrization) of $P$ and $Q$ where at all times the red entity traversing $P$ is close to the blue entity traversing $Q$.
First, the red entity remains stationary at the vertex $\alpha$, whilst the blue entity traverses $B$ until it reaches $g(b)$.
Since $\alpha$ is close to every blue vertex, the entities remain close.

Then, whilst the blue entity remains stationary at the vertex $x$, the red entity traverses $P$ until it reaches the start of $f(a)$ into the vertex $\gamma$.
Every red vertex (except those on the path between $(R, \beta^*, \beta)$ is close to $x$ and thus the entities remain close. 

At this point, the blue entity moves to $y$ as the red entity remains stationary at $\gamma$. Then, both entities simultaneously traverse their respective vector gadgets. Observe that during this traversal, since $a$ and $b$ are orthogonal, the entities remain close. 
After this traversal, the blue entity remains stationary at $z$, whilst the red entity traverses the remainder of $P$.
Every red vertex (except those on the path between $(\alpha, \alpha^*, \gamma)$ is close to $z$ and thus the entities remain close during this traversal. 
Finally, the red entity remains stationary at $\beta$ whilst the blue entity traverses the remainder of $Q$.

\subparagraph{$\FD(P, Q) \leq 1.9$ implies two orthogonal vectors.}
We show that if the \frechet distance between $P$ and $Q$ is at most $1.9$, then there exists a pair of vectors $(a, b) \in A \times B$ such that $a$ and $b$ are orthogonal. 
Indeed, fix any traversal of $P$ and $Q$ that realises the \frechet distance. 
When the red entity is at $\alpha^*$, the blue entity must be at $x$ on some subwalk $f(a)$.

\newpage
Consider now the  time where the blue entity moves from $x$ to $y$ (where $y$ lies in a gadget corresponding to some vector $b \in B$). 
At this time, the red entity cannot be at the vertex $\alpha$ because $\alpha$ precedes $\alpha^*$. 
Similarly the red entity cannot be at the vertex $\beta$ because $\beta^*$ precedes $\beta$ and $\beta^*$ is not close to any vertex between $x$ and $y$. 
It follows, that the blue entity must be at $y$ on some subwalk $f(b)$

Now let without loss of generality the red entity continue the traversal in the direction of $\{ \Az, \Ao \}$. We can apply the same argument and conclude that the blue entity must also move to the unique corresponding vertex in the traversal in the direction of $\{ \Bz, \Bo \}$ (where the blue entity must go towards $\Bo$ if the red entity goes to $\Az$).

We can continue to apply the same argument to show that these vectors $a$ and $b$ must be orthogonal. This concludes the proof.
\end{proof}

\section{Concluding remarks}
This paper is the first to study the natural question of computing the \frechet distance between walks $P$ and $Q$ in graphs.
Our algorithmic results (including the Voronoi diagram construction) do not depend on the planarity of $G$; we rely only on a distance oracle. 
Hence, our result immediately holds for other classes of graphs where it is possible to efficiently construct distance oracles or in computational models where the distance oracle is provided. 
Given a distance oracle, our $(\kappa + 1)$ approximation is obtained in time (near-) linear in $(|P| +|Q|)$. 
In other words, it is possible to (efficiently) pre-process a graph $G$ in order to efficiently facilitate \frechet distance queries between two walks among which one is $\kappa$ straight. 
This is not true for our $(1 + \eps)$-approximation, which requires the construction of a Voronoi diagram of $P$ in $G$. 



\bibliography{main}

\begin{thebibliography}{10}

\bibitem{abboud2014popular}
Amir Abboud and Virginia~Vassilevska Williams.
\newblock Popular conjectures imply strong lower bounds for dynamic problems.
\newblock In {\em 2014 IEEE 55th Annual Symposium on Foundations of Computer
  Science}, pages 434--443. IEEE, 2014.

\bibitem{alt1995computing}
Helmut Alt and Michael Godau.
\newblock Computing the {F}r{\'e}chet distance between two polygonal curves.
\newblock {\em International Journal of Computational Geometry \&
  Applications}, 5(01n02):75--91, 1995.

\bibitem{alt2004comparison}
Helmut Alt, Christian Knauer, and Carola Wenk.
\newblock Comparison of distance measures for planar curves.
\newblock {\em Algorithmica}, 38(1):45--58, 2004.

\bibitem{aronov2006frechet}
Boris Aronov, Sariel Har-Peled, Christian Knauer, Yusu Wang, and Carola Wenk.
\newblock {F}r{\'e}chet distance for curves, revisited.
\newblock In {\em European symposium on algorithms}, pages 52--63. Springer,
  2006.

\bibitem{bombelli2017strategic}
Alessandro Bombelli, Lluis Soler, Eric Trumbauer, and Kenneth~D Mease.
\newblock Strategic air traffic planning with {F}r{\'e}chet distance
  aggregation and rerouting.
\newblock {\em Journal of Guidance, Control, and Dynamics}, 40(5):1117--1129,
  2017.

\bibitem{brakatsoulas2005map}
Sotiris Brakatsoulas, Dieter Pfoser, Randall Salas, and Carola Wenk.
\newblock On map-matching vehicle tracking data.
\newblock In {\em Proceedings of the 31st international conference on Very
  large data bases}, pages 853--864, 2005.

\bibitem{bringmann2014walking}
Karl Bringmann.
\newblock Why walking the dog takes time: {F}rechet distance has no strongly
  subquadratic algorithms unless seth fails.
\newblock In {\em 2014 IEEE 55th Annual Symposium on Foundations of Computer
  Science}, pages 661--670. IEEE, 2014.

\bibitem{bringmann2017improved}
Karl Bringmann and Marvin K{\"u}nnemann.
\newblock Improved approximation for {F}r{\'e}chet distance on c-packed curves
  matching conditional lower bounds.
\newblock {\em International Journal of Computational Geometry \&
  Applications}, 27(01n02):85--119, 2017.

\bibitem{bringmann2016approximability}
Karl Bringmann and Wolfgang Mulzer.
\newblock Approximability of the discrete {F}r{\'e}chet distance.
\newblock {\em Journal of Computational Geometry}, 7(2):46--76, 2016.

\bibitem{buchin2017clustering}
Kevin Buchin, Maike Buchin, David Duran, Brittany~Terese Fasy, Roel Jacobs,
  Vera Sacristan, Rodrigo~I Silveira, Frank Staals, and Carola Wenk.
\newblock Clustering trajectories for map construction.
\newblock In {\em Proceedings of the 25th ACM SIGSPATIAL International
  Conference on Advances in Geographic Information Systems}, pages 1--10, 2017.

\bibitem{buchin2017four}
Kevin Buchin, Maike Buchin, Wouter Meulemans, and Wolfgang Mulzer.
\newblock Four soviets walk the dog: Improved bounds for computing the
  {F}r{\'e}chet distance.
\newblock {\em Discrete \& Computational Geometry}, 58(1):180--216, 2017.

\bibitem{buchin2009exact}
Kevin Buchin, Maike Buchin, and Yusu Wang.
\newblock Exact algorithms for partial curve matching via the {F}r{\'e}chet
  distance.
\newblock In {\em Proceedings of the twentieth annual ACM-SIAM symposium on
  Discrete algorithms}, pages 645--654. SIAM, 2009.

\bibitem{buchin2019seth}
Kevin Buchin, Tim Ophelders, and Bettina Speckmann.
\newblock Seth says: Weak {F}r{\'e}chet distance is faster, but only if it is
  continuous and in one dimension.
\newblock In {\em Proceedings of the Thirtieth Annual ACM-SIAM Symposium on
  Discrete Algorithms}, pages 2887--2901. SIAM, 2019.

\bibitem{buchin2020group}
Maike Buchin, Bernhard Kilgus, and Andrea K{\"o}lzsch.
\newblock Group diagrams for representing trajectories.
\newblock {\em International Journal of Geographical Information Science},
  34(12):2401--2433, 2020.

\bibitem{chambers2010homotopic}
Erin~Wolf Chambers, Eric~Colin De~Verdiere, Jeff Erickson, Sylvain Lazard,
  Francis Lazarus, and Shripad Thite.
\newblock Homotopic {F}r{\'e}chet distance between curves or, walking your dog
  in the woods in polynomial time.
\newblock {\em Computational Geometry}, 43(3):295--311, 2010.

\bibitem{chan2018improved}
Timothy~M Chan and Zahed Rahmati.
\newblock An improved approximation algorithm for the discrete {F}r{\'e}chet
  distance.
\newblock {\em Information Processing Letters}, 138:72--74, 2018.

\bibitem{Charalampopoulos19}
Panagiotis Charalampopoulos, Pawel Gawrychowski, Shay Mozes, and Oren Weimann.
\newblock Almost optimal distance oracles for planar graphs.
\newblock In Moses Charikar and Edith Cohen, editors, {\em Proceedings of the
  51st Annual {ACM} {SIGACT} Symposium on Theory of Computing, {STOC} 2019,
  Phoenix, AZ, USA, June 23-26, 2019}, pages 138--151. {ACM}, 2019.
\newblock \href {https://doi.org/10.1145/3313276.3316316}
  {\path{doi:10.1145/3313276.3316316}}.

\bibitem{Cohen-AddadDahlgaardWulff-Nilsen17}
Vincent Cohen{-}Addad, S{\o}ren Dahlgaard, and Christian Wulff{-}Nilsen.
\newblock Fast and compact exact distance oracle for planar graphs.
\newblock In Chris Umans, editor, {\em 58th {IEEE} Annual Symposium on
  Foundations of Computer Science, {FOCS} 2017, Berkeley, CA, USA, October
  15-17, 2017}, pages 962--973. {IEEE} Computer Society, 2017.
\newblock \href {https://doi.org/10.1109/FOCS.2017.93}
  {\path{doi:10.1109/FOCS.2017.93}}.

\bibitem{colombe2021approximating}
Connor Colombe and Kyle Fox.
\newblock Approximating the (continuous) {F}r{\'e}chet distance.
\newblock In {\em 37th International Symposium on Computational Geometry (SoCG
  2021)}. Schloss Dagstuhl-Leibniz-Zentrum f{\"u}r Informatik, 2021.

\bibitem{devogele2002new}
Thomas Devogele.
\newblock A new merging process for data integration based on the discrete
  {F}r{\'e}chet distance.
\newblock In {\em Advances in spatial data handling}, pages 167--181. Springer,
  2002.

\bibitem{dijkstra1959note}
Edsger~W Dijkstra et~al.
\newblock A note on two problems in connexion with graphs.
\newblock {\em Numerische mathematik}, 1(1):269--271, 1959.

\bibitem{driemel2013jaywalking}
Anne Driemel and Sariel Har-Peled.
\newblock Jaywalking your dog: computing the {F}r{\'e}chet distance with
  shortcuts.
\newblock {\em SIAM Journal on Computing}, 42(5):1830--1866, 2013.

\bibitem{driemel2012approximating}
Anne Driemel, Sariel Har-Peled, and Carola Wenk.
\newblock Approximating the {F}r{\'e}chet distance for realistic curves in near
  linear time.
\newblock {\em Discrete \& Computational Geometry}, 48(1):94--127, 2012.

\bibitem{DriemelHW12}
Anne Driemel, Sariel Har{-}Peled, and Carola Wenk.
\newblock Approximating the {F}r{\'{e}}chet distance for realistic curves in
  near linear time.
\newblock {\em Discret. Comput. Geom.}, 48(1):94--127, 2012.
\newblock \href {https://doi.org/10.1007/s00454-012-9402-z}
  {\path{doi:10.1007/s00454-012-9402-z}}.

\bibitem{eitermannila94}
Thomas Eiter and Heikki Mannila.
\newblock Computing discrete {F}r{\'e}chet distance.
\newblock Technical Report CD-TR 94/64, Christian Doppler Laboratory for Expert
  Systems, TU Vienna, Austria, 1994.

\bibitem{DBLP:journals/tcs/GuX19}
Qian{-}Ping Gu and Gengchun Xu.
\newblock Constant query time (1+$\varepsilon$)-approximate distance oracle for
  planar graphs.
\newblock {\em Theor. Comput. Sci.}, 761:78--88, 2019.
\newblock \href {https://doi.org/10.1016/j.tcs.2018.08.024}
  {\path{doi:10.1016/j.tcs.2018.08.024}}.

\bibitem{gudmundsson2019fast}
Joachim Gudmundsson, Majid Mirzanezhad, Ali Mohades, and Carola Wenk.
\newblock Fast {F}r{\'e}chet distance between curves with long edges.
\newblock {\em International Journal of Computational Geometry \&
  Applications}, 29(02):161--187, 2019.

\bibitem{iv2010geodesic}
Atlas F~Cook IV and Carola Wenk.
\newblock Geodesic {F}r{\'e}chet distance inside a simple polygon.
\newblock {\em ACM Transactions on Algorithms (TALG)}, 7(1):1--19, 2010.

\bibitem{jiang2008protein}
Minghui Jiang, Ying Xu, and Binhai Zhu.
\newblock Protein structure--structure alignment with discrete {F}r{\'e}chet
  distance.
\newblock {\em Journal of bioinformatics and computational biology},
  6(01):51--64, 2008.

\bibitem{klein2002preprocessing}
Philip~N. Klein.
\newblock Preprocessing an undirected planar network to enable fast approximate
  distance queries.
\newblock In David Eppstein, editor, {\em Proceedings of the Thirteenth Annual
  {ACM-SIAM} Symposium on Discrete Algorithms, January 6-8, 2002, San
  Francisco, CA, {USA}}, pages 820--827. {ACM/SIAM}, 2002.
\newblock URL: \url{http://dl.acm.org/citation.cfm?id=545381.545488}.

\bibitem{DBLP:conf/soda/Klein05}
Philip~N. Klein.
\newblock Multiple-source shortest paths in planar graphs.
\newblock In {\em Proceedings of the Sixteenth Annual {ACM-SIAM} Symposium on
  Discrete Algorithms, {SODA} 2005, Vancouver, British Columbia, Canada,
  January 23-25, 2005}, pages 146--155. {SIAM}, 2005.
\newblock URL: \url{http://dl.acm.org/citation.cfm?id=1070432.1070454}.

\bibitem{konzack2017visual}
Maximilian Konzack, Thomas McKetterick, Tim Ophelders, Maike Buchin, Luca
  Giuggioli, Jed Long, Trisalyn Nelson, Michel~A Westenberg, and Kevin Buchin.
\newblock Visual analytics of delays and interaction in movement data.
\newblock {\em International Journal of Geographical Information Science},
  31(2):320--345, 2017.

\bibitem{LongPettie}
Yaowei Long and Seth Pettie.
\newblock Planar distance oracles with better time-space tradeoffs.
\newblock In D{\'{a}}niel Marx, editor, {\em Proceedings of the 2021 {ACM-SIAM}
  Symposium on Discrete Algorithms, {SODA} 2021, Virtual Conference, January 10
  - 13, 2021}, pages 2517--2537. {SIAM}, 2021.
\newblock \href {https://doi.org/10.1137/1.9781611976465.149}
  {\path{doi:10.1137/1.9781611976465.149}}.

\bibitem{maheshwari2011frechet}
Anil Maheshwari, J{\"o}rg-R{\"u}diger Sack, Kaveh Shahbaz, and Hamid
  Zarrabi-Zadeh.
\newblock {F}r{\'e}chet distance with speed limits.
\newblock {\em Computational Geometry}, 44(2):110--120, 2011.

\bibitem{mascret2006coastline}
Ariane Mascret, Thomas Devogele, Iwan Le~Berre, and Alain H{\'e}naff.
\newblock Coastline matching process based on the discrete {F}r{\'e}chet
  distance.
\newblock In {\em Progress in Spatial Data Handling}, pages 383--400. Springer,
  2006.

\bibitem{matousek2013lectures}
Jiri Matousek.
\newblock {\em Lectures on discrete geometry}, volume 212.
\newblock Springer Science \& Business Media, 2013.

\bibitem{RodittyTZ05}
Liam Roditty, Mikkel Thorup, and Uri Zwick.
\newblock Deterministic constructions of approximate distance oracles and
  spanners.
\newblock In Lu{\'{\i}}s Caires, Giuseppe~F. Italiano, Lu{\'{\i}}s Monteiro,
  Catuscia Palamidessi, and Moti Yung, editors, {\em Automata, Languages and
  Programming, 32nd International Colloquium, {ICALP} 2005, Lisbon, Portugal,
  July 11-15, 2005, Proceedings}, volume 3580 of {\em Lecture Notes in Computer
  Science}, pages 261--272. Springer, 2005.
\newblock \href {https://doi.org/10.1007/11523468\_22}
  {\path{doi:10.1007/11523468\_22}}.

\bibitem{acmsurvey20}
Roniel S.~De Sousa, Azzedine Boukerche, and Antonio A.~F. Loureiro.
\newblock Vehicle trajectory similarity: Models, methods, and applications.
\newblock {\em ACM Comput. Surv.}, 53(5), September 2020.
\newblock \href {https://doi.org/10.1145/3406096} {\path{doi:10.1145/3406096}}.

\bibitem{sriraghavendra2007frechet}
E~Sriraghavendra, K~Karthik, and Chiranjib Bhattacharyya.
\newblock {F}r{\'e}chet distance based approach for searching online
  handwritten documents.
\newblock In {\em Ninth International Conference on Document Analysis and
  Recognition (ICDAR 2007)}, volume~1, pages 461--465. IEEE, 2007.

\bibitem{su2020survey}
Han Su, Shuncheng Liu, Bolong Zheng, Xiaofang Zhou, and Kai Zheng.
\newblock A survey of trajectory distance measures and performance evaluation.
\newblock {\em The VLDB Journal}, 29(1):3--32, 2020.

\bibitem{thorup2004compact}
Mikkel Thorup.
\newblock Compact oracles for reachability and approximate distances in planar
  digraphs.
\newblock {\em Journal of the ACM (JACM)}, 51(6):993--1024, 2004.

\bibitem{ThorupZwick}
Mikkel Thorup and Uri Zwick.
\newblock Approximate distance oracles.
\newblock In Jeffrey~Scott Vitter, Paul~G. Spirakis, and Mihalis Yannakakis,
  editors, {\em Proceedings on 33rd Annual {ACM} Symposium on Theory of
  Computing, July 6-8, 2001, Heraklion, Crete, Greece}, pages 183--192. {ACM},
  2001.
\newblock \href {https://doi.org/10.1145/380752.380798}
  {\path{doi:10.1145/380752.380798}}.

\bibitem{Williams05}
Ryan Williams.
\newblock A new algorithm for optimal 2-constraint satisfaction and its
  implications.
\newblock {\em Theor. Comput. Sci.}, 348(2-3):357--365, 2005.
\newblock \href {https://doi.org/10.1016/j.tcs.2005.09.023}
  {\path{doi:10.1016/j.tcs.2005.09.023}}.

\bibitem{Wulff-Nilsen16}
Christian Wulff{-}Nilsen.
\newblock Approximate distance oracles with improved query time.
\newblock In {\em Encyclopedia of Algorithms}, pages 94--97. 2016.
\newblock \href {https://doi.org/10.1007/978-1-4939-2864-4\_568}
  {\path{doi:10.1007/978-1-4939-2864-4\_568}}.

\bibitem{xie2017distributed}
Dong Xie, Feifei Li, and Jeff~M Phillips.
\newblock Distributed trajectory similarity search.
\newblock {\em Proceedings of the VLDB Endowment}, 10(11):1478--1489, 2017.

\end{thebibliography}

\appendix

\newpage


\section{From \frechet definitions to discrete walks}
\label{app:frechetDef}

This section is dedicated to elaborating on the definition of the \frechet distance.
First, we provide the `classical' definition of the discrete \frechet distance. Then, we equate this definition to the definition provided in the main body.

For ease of exposition, we say that a function $f : [0, 1 ] \mapsto N$ is continuous whenever it has the following property: let $a, b \in [0, 1]$ with $a < b$. 
If $f(a) = i$ and $f(b) = j$, then for all $k$ in $[i, j]$ there exists a $c \in [a, b]$ such that $f(c) = k$.
 We denote the \emph{discrete strong \frechet distance} between $P$ and $Q$ by $\FD(P, Q)$ and define it as: 
\[ \FD(P,Q) := \inf_{\alpha,\beta}\max_{t \in [0,1]} d( p_{\alpha(t)}, q_{\beta(t)} ), \]
where $\alpha : [0, 1] \mapsto [n]$ and $\beta : [0, 1] \mapsto [m]$ are continuous $xy$-monotone surjections with $\alpha(0) = \beta(0) = 0$ and $\alpha(1) = n$ and  $\beta(1) = m$. The functions $\alpha, \beta$ are often called \emph{reparametrizations} of the traversal of $P$ and $Q$~\cite{DriemelHW12}.
The \emph{discrete weak \frechet distance} is defined identically, but $\alpha$ and $\beta$ are no longer required to be $xy$-monotone.

\subparagraph{From reparametrizations to discrete walks.}
Given two explicit reparametrizations $\alpha, \beta$, we create a representation of $(\alpha, \beta)$ that is algorithmically more intuitive.
On an intuitive level for each time $t$ there is a pair $(p_i,  q_j)$ such that $(p_i, q_j) = (p_{\alpha(t)} , q_{\beta(t)})$. Intuitively, we denote by $AB$ the sequence of all pairs ordered over $t$.
Formally, we call a time $t' \in [0, 1]$ an \emph{event} if for arbitrarily small $\eps > 0$,  $\alpha(t') \neq \alpha(t' - \eps)$ or $\beta(t') \neq \beta(t' - \eps)$.
The set $AB := \{ (a(k), b(k) ) \}$ where $(a(0), b(0) ) = (1, 1)$ and the $k$'th tuple corresponds to the $k$'th event $t'$ with $a(k) = \alpha(t')$ and $b(k) = \beta(t')$. 
We immediately observe that for every $\alpha, \beta$ the corresponding set $AB$ has the following property:
\begin{observation}
For all $\alpha, \beta$, the set $AB$ is a walk over the integer lattice $[n] \times [m]$.
\end{observation}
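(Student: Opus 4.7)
The plan is to verify the two defining conditions of a walk over $[n] \times [m]$ as spelled out in the preliminaries: every element lies in the integer lattice, and consecutive elements differ by at most $1$ in each coordinate. The lattice-containment condition is immediate from the codomains of $\alpha$ and $\beta$: each $(a(k), b(k)) = (\alpha(t'_k), \beta(t'_k))$ lies in $[n] \times [m]$ by construction.

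The substantive step is to bound the jumps between consecutive entries. Let $t'_k < t'_{k+1}$ be consecutive events. Since by definition no event occurs in the open interval $(t'_k, t'_{k+1})$, both $\alpha$ and $\beta$ are constant on this interval (if either changed value at some interior time $s$, then $s$ would itself be an event). I would then argue by contradiction: if $\alpha$ jumped from $i = \alpha(t'_k)$ to $j = \alpha(t'_{k+1})$ with $|i - j| \geq 2$, I would pick any $a$ inside $(t'_k, t'_{k+1})$ together with $b = t'_{k+1}$ and invoke the integer intermediate-value property of $\alpha$ (the paper's notion of continuity at the top of the appendix), which forces $\alpha$ to attain every integer strictly between $i$ and $j$ somewhere on $[a, b]$. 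But on this interval $\alpha$ takes only the values $i$ (on $[a, t'_{k+1})$) and $j$ (at $t'_{k+1}$), a contradiction. The identical argument handles $\beta$, yielding $|a(k+1) - a(k)| \leq 1$ and $|b(k+1) - b(k)| \leq 1$, which is exactly the step condition in the walk definition.

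The only real subtlety is semantic: unpacking the paper's non-standard ``continuity'' notion as a discrete intermediate-value theorem, which is precisely what converts ``constant between events'' into ``unit-size jump at an event.'' Once that is recorded, the argument is a short case analysis and uses no ordering hypothesis on $\alpha, \beta$, so the observation covers the weak and strong \frechet settings uniformly; the additional $xy$-monotonicity of the strong case will just refine the walk to an $xy$-monotone walk in the sense of the preliminaries, but is not needed for the walk property itself.
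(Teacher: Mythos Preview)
The paper does not actually prove this observation; it is recorded as an immediate consequence of the definitions with no argument given. Your proof supplies precisely the details the paper leaves implicit---constancy of $\alpha$ and $\beta$ between consecutive events, together with the integer intermediate-value property from the appendix, forces each coordinate to change by at most one at an event---and this is the natural and correct way to make the observation rigorous.
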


\noindent
We now immediately note that whenever $\alpha, \beta$ are monotone, $AB$ is a $xy$-monotone path.

\section{Constructing a Voronoi diagram on a planar graph.}
\label{appx:voronoi}

Throughout this paper we present algorithms to compute $\FD(P, Q)$ where $P$ is a path and $Q$ in a walk through some graph $G$. 
Our results rely on constructing the Voronoi diagram of $P$ in the graph $G$.
That is, we want to provide every vertex $v$ in $G$ with a pointer to the vertex $p \in P$ that is closest to $v$. 
In the main body of this paper we claim that the following theorem is a folklore result that is known throughout the geometry community.
For completeness, we present a proof (of a slightly stronger result)

\VoronoiD*

\begin{theorem}
For any weighted (not necessarily planar) graph $G = (V, E)$ and any $P \subseteq V$, it is possible to construct the Voronoi diagram of $P$ in $G$ in $O( (|E| + |V|) \log |V|)$ time.
\end{theorem}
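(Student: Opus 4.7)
The plan is to reduce the construction to a single multi-source shortest-path computation and then invoke Dijkstra's algorithm. Concretely, I would build an auxiliary graph $G' = (V \cup \{s\}, E \cup \{(s,p) : p \in P\})$ by attaching a new super-source $s$ to every site with an edge of weight $0$. By construction, for every $v \in V$, the shortest-path distance $d_{G'}(s, v)$ equals $\min_{p \in P} d_G(p, v)$, since any $s$-$v$ path in $G'$ must begin with a zero-weight edge $s \to p$ for some $p \in P$ and then follow a path in $G$.

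Next, I would run Dijkstra's algorithm from $s$ on $G'$, augmented with a label field at each vertex. Initialize $\textnormal{label}(s)$ arbitrarily and, after $s$'s outgoing edges are relaxed, set $\textnormal{label}(p) = p$ for every $p \in P$. Whenever relaxing an edge $(u, v)$ strictly improves $v$'s tentative distance, propagate $\textnormal{label}(v) \gets \textnormal{label}(u)$; ties may be broken arbitrarily but consistently. When $v$ is extracted from the priority queue, its label records one site $p \in P$ that minimizes $d_G(p, v)$, and this suffices to define the Voronoi partition (together with the distance $d_{G'}(s,v)$, which the algorithm already computes).

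Correctness is a direct inheritance from Dijkstra's correctness: a standard induction on the order of extraction shows that when $v$ is finalized, some optimal $s$-$v$ path has the form $s \to \textnormal{label}(v) \to \cdots \to v$, so $\textnormal{label}(v)$ is indeed a nearest site. For the running time, $G'$ has $|V|+1$ vertices and $|E| + |P| \le |E| + |V|$ edges, so Dijkstra with a binary heap runs in $O((|E|+|V|)\log|V|)$ time; the label bookkeeping adds only $O(1)$ per edge relaxation and therefore does not affect the bound.

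I do not expect a substantive obstacle: there is no planarity-specific ingredient, so the stronger (general-graph) formulation actually simplifies the argument. The only minor care needed is to confirm that the label propagation rule is consistent with some shortest-path tree rooted at $s$ in $G'$; handling this is routine, and any tie-breaking convention (e.g., prefer the current label, or prefer the smaller-indexed site) yields a valid partition of $V$ into Voronoi cells.
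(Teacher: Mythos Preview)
Your proposal is correct and takes essentially the same approach as the paper: both reduce to a multi-source Dijkstra computation with site labels propagated along relaxations. The only cosmetic difference is that the paper initializes the settled set directly with all of $P$, whereas you add a super-source with zero-weight edges; these are standard, equivalent formulations of the same algorithm.
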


\begin{proof}
The algorithm that we present is essentially Dijkstra's shortest path algorithm~\cite{dijkstra1959note}:

\subparagraph{An overview of Dijkstra's algorithm.}
The input of Dijkstra's algorithm is a weighted graph $G = (V, E)$ and a vertex $v \in V$. The algorithm (when using a priority queue) runs in  $O( (|E| + |V|) \log |V|)$ and can store for any vertex $u \in V$ the distance to $d(u, v)$ (where the distance is the length of the shortest path from $v$ to $u$). 

The algorithm maintains two sets:
\begin{itemize}
    \item The set $A$ where their distance to $v$ is known, and
    \item The set $X$ that are adjacent to at least one element in $A$.
\end{itemize}
Intuitively, the set $A$ is a `wavefront' of vertices that originates from $v$ and the set $X$ are all vertices adjacent to this wavefront.
Formally, the algorithm initializes the set $A$ as $\{ v \}$. 
The set $X$ is stored in a priority queue where each element is assigned a distance `weight' (this distance weight considers for a vertex $x \in X$ all adjacent $a \in A$, the distance $d(a, v) + w( (a, x) )$ and considers the minimal such value). 

The algorithm iteratively selects the element $x$ with lowest weight from $X$: 
this weight must be the length of the shortest path from $v$ to $x$. Then, the algorithm moves $x$ to the set $A$ and updates $X$ accordingly. 

\subparagraph{Adjusting Dijksta's algorithm for Voronoi diagrams.}
We define for any vertex $v$, the shortest path from $P$ to $v$ as path that realises $\min_{p \in P } d(p, v)$.
We apply Dijkstra's algorithm, where the wavefront is started not from a single vertex, but from all vertices in $P$ simultaneously.
We make only two small adjustments:
\begin{enumerate}
    \item We store for every vertex  $a \in A$ the length of the shortest path from $P$ to $a$ and a reference to the element of $P$ that realises this distance (which we denote as $\textsc{site}(a)$).
    \item We initialise $A$ not as a single vertex, but instead initialise $A$ as: $A  \gets P$ (For every element $a \in P$, $\textsc{site}(a) = a$).
\end{enumerate}
Just as in Dijkstra's algorithm, the set $A$ implies the set $X$. 
The `weight' of every element $x \in X$ remains the minimum over all neighbors $a \in A$ of the value $d(a, \textsc{site}(a)) + w( (a, x) )$ but, together with this integer, we additionally store $\textsc{site}(a)$.

Our algorithm iteratively selects the element $x$ with lowest weight from $X$.
Via the same argument as Dijkstra's algorithm, this weight must be the length of the shortest path $P$ to $x$.
This path ends with an edge $(a, x)$, and we set $\textsc{site}(x) \gets \textsc{site}(a)$. 
Then, our algorithm moves $x$ to the set $A$ and updates $X$ accordingly. 

Since our algorithm performs the exact same computational steps as Dijkstra, the rutime remains $O( (|E| + |V|) \log |V|)$.
\end{proof}

\section{A \texorpdfstring{$(1 + \eps)$}{(1+e)}-approximation of the weak \frechet distance }
\label{appx:weakfrechet}

In this section, we show algorithmic results to approximate the \emph{weak} \frechet distance between a $\kappa$-straight path $P$ and a walk $Q$ in $G$. 
The definition of the weak \frechet closely mirrors the definition of the (strong) \frechet distance, but is more relaxed as it allows the two entities traversing the walks to occasionally walk backwards.
Recall that an ordered sequence $F$ of points in $[n] \times [m]$ is a \emph{discrete walk} if for every consecutive pair $(i, j), (k, l) \in F$, 
we have $k\in\{i-1,i,i+1\}$ and $l\in \{j-1,j,j+1\}$. 
Let $F$ be a discrete walk from $(1, 1)$ to $(n, m)$. 
The \emph{cost} of $F$ is the maximum over $(i, j) \in F$ of $d(p_i, q_j)$. 
The weak discrete \frechet distance is the minimum over \emph{all} (not necessarily $xy$-monotone) walks $F$ from $(1, 1)$ to $(|P|, |Q|)$ of the cost:
\[
\WFD (P, Q) := \min_{F} \textnormal{cost}(F)  = \min_F \max_{(i, j) \in F} d(p_i, q_j).
\]

\subparagraph{A brief overview of our algorithm.}
Our algorithm to compute a $(1 + \eps)$ approximation for $\WFD(P, Q)$ is near-identical to our algorithm in Section~\ref{sec:epsilonapproximation}.
We choose $(1 + \eps) = (1 + \alpha)(1 + \alpha + \beta)$ for $\alpha \leq 0.5$ and we construct the $(1 + \alpha)$ distance oracle $\dist$ by Thorup~\cite{thorup2004compact} and the Voronoi diagram of $P$ in $G$  in $O(N \log N / \alpha)$ time.

Then, for some fixed variable $\rho$, we compute the $\beta$-compressed path $P^\beta$ and the $\beta$-windows $W_j$ for all $j \in [m]$.
Lemma~\ref{lemma:windowsize} is independent of the definition of the \frechet distance and it immediately follows that any (not necessarily monotone) discrete walk $F$ over $[n] \times [m]$ that realises $\WFD(P, Q) \leq \rho$ must be contained in the grid $A = \cup_j W_j$.
In Theorem~\ref{thm:epsdecision} we considered all the grid points in $A$ and all $xy$-monotone paths in that grid. 
Here, we do the exact same but drop the restriction that the paths that we are considering are $xy$-monotone:

\begin{theorem}
Let $G$ be a planar graph with $N$ vertices, $P = (p_1, \ldots, p_n )$ a $\kappa$-straight path and $Q = (q_1, \ldots ,q_{m})$ be any walk in $G$.
Given a value $\rho \in \R$ and some $\beta$ and $\alpha\leq 0.5$, we correctly conclude  either $\WFD(P, Q)  >  \rho$ or $\WFD(P, Q) \leq (1 + \alpha)(1 + \alpha + \beta) \rho$ in  $O(N \log N / \alpha + n + \frac{\kappa}{\alpha \beta} m))$ time using $O(N \log N / \alpha)$ space. 
\end{theorem}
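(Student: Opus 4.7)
The plan is to adapt the algorithm of Theorem~\ref{thm:epsdecision} by replacing the $xy$-monotone grid graph with its undirected, $8$-connected analogue, and rerunning essentially the same correctness argument. All preprocessing is identical: build the $(1+\alpha)$-stretch oracle $\dist$ and the Voronoi diagram of $P$ in $G$ in $O(N\log N/\alpha)$ time and space, then compute the $\beta$-compressed path $P^\beta$ together with the map $\pieps$ in $O(n)$ time. For each $j\in[m]$, using the Voronoi diagram I locate in $O(1)$ time the vertex $p_x\in P$ closest to $q_j$ and, from its predecessor in $P^\beta$, construct the $\beta$-window $W_j\subseteq[n']\times[m]$ in $O(\kappa/\beta)$ time. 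I then issue one oracle query per lattice point in $A = \bigcup_j W_j$, in $O(m\kappa/(\alpha\beta))$ total time, to determine the entries of $\Meps$ restricted to $A$.

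\textbf{Search graph.} The only deviation from Theorem~\ref{thm:epsdecision} is in the combinatorial search. Build an undirected graph $H$ on the vertex set $A$, with an edge between $(a,j)\in A$ and any of its (up to eight) lattice neighbors $(a',j')\in A$ whenever $\Meps[\pieps(a),j]<1$ and $\Meps[\pieps(a'),j']<1$. Decide by BFS whether $(1,1)$ and $(n',m)$ lie in the same connected component of $H$; this takes $O(m\kappa/\beta)$ time. If they do, report $\WFD(P,Q)\leq(1+\alpha)(1+\alpha+\beta)\rho$; otherwise report $\WFD(P,Q)>\rho$.

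\textbf{Correctness.} For soundness, suppose the BFS finds a path $F^*$ in $H$ from $(1,1)$ to $(n',m)$. I would expand it to a discrete walk $F$ on $[n]\times[m]$ exactly as in Theorem~\ref{thm:epsdecision}: replace each step of $F^*$ from $(a,j)$ to $(a',j')$ by the sequence of lattice points $(x,j')$ with $x\in[\pieps(\min(a,a')),\pieps(\max(a,a'))]$. Lemma~\ref{lemma:projectcompression} guarantees $\Meps[x,j']<1$ for all such $x$, and the resulting sequence is a valid (not necessarily monotone) discrete walk, so $\WFD(P,Q)\leq(1+\alpha)(1+\alpha+\beta)\rho$. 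Conversely, suppose $\WFD(P,Q)\leq\rho$ witnessed by a discrete walk $F$ on $[n]\times[m]$ with $d(p_i,q_j)\leq\rho$ for every $(i,j)\in F$. By Lemma~\ref{lemma:windowsize}---whose proof nowhere invokes monotonicity---for each $j$ every column index $x$ appearing in $F$ with $\Meps[x,j]=-1$ lies within a $\beta$-window around the compression index closest to $j$. Projecting each $(x,j)\in F$ through $\pieps$ (mapping $x$ to the index $a$ with $p_{\pieps(a)}=p_x$ when $p_x\in P^\beta$, and to the preceding compression index otherwise) produces a sequence of lattice points in $A$. Lemma~\ref{lemma:interiormistake} ensures each projected point satisfies $\Meps[\pieps(a),j]<1$, and consecutive points of $F$ project to lattice neighbors in $A$, so the projection is a walk in $H$ from $(1,1)$ to $(n',m)$.

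\textbf{Main obstacle.} The only delicate point is checking that the projection and expansion steps from Theorem~\ref{thm:epsdecision} continue to work when $F$ and $F^*$ are allowed to move in all eight directions. This amounts to confirming that (i) Lemma~\ref{lemma:windowsize} is direction-agnostic (it is, since it only bounds which columns can contain free cells in a given row), and (ii) replacing horizontal segments by intermediate lattice points still yields a valid discrete walk without the monotonicity requirement (it does, since discrete walks only require consecutive entries to be lattice-adjacent). Once these are noted, the time bound $O(N\log N/\alpha + n + \kappa m/(\alpha\beta))$ and space bound $O(N\log N/\alpha)$ follow exactly as in Theorem~\ref{thm:epsdecision}.
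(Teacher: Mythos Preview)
Your overall strategy mirrors the paper's, but the edge condition you impose on $H$ is strictly weaker than the one the paper uses, and this breaks your soundness argument. In the paper, an undirected horizontal or diagonal edge between $(a,j)$ and $(a',j')$ (with $a'\neq a$) is present only if \emph{at least one} of the two endpoints has $\Meps$ equal to $-1$; vertical edges require only that both entries be $<1$. Your $H$ instead adds an edge whenever both entries are $<1$, so it may contain horizontal or diagonal edges with $\Meps[\pieps(a),j]=\Meps[\pieps(a'),j']=0$.

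This matters precisely at the expansion step. Lemma~\ref{lemma:projectcompression} has the hypothesis $\Meps[\pieps(i),j]=-1$, not merely $<1$: if the perceived distance at a compression vertex is only known to be at most $(1+\alpha)(1+\alpha+\beta)\rho$, then by the triangle inequality an intermediate vertex $p_x$ with $d(p_x,p_{\pieps(i)})<\beta\rho$ may have actual distance to $q_j$ as large as $(\beta+(1+\alpha)(1+\alpha+\beta))\rho$, which exceeds $(1+\alpha+\beta)\rho$. Hence when you ``replace each step of $F^*$ from $(a,j)$ to $(a',j')$ by the sequence $(x,j')$'' you cannot invoke Lemma~\ref{lemma:projectcompression} to bound $\Meps[x,j']$, and the expanded walk $F$ need not certify $\WFD(P,Q)\le(1+\alpha)(1+\alpha+\beta)\rho$. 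Your algorithm as stated can therefore output the ``$\le$'' answer incorrectly. The fix is exactly what the paper does: keep vertical edges as you have them, but for horizontal and diagonal edges additionally require one endpoint to satisfy $\Meps=-1$, and in the expansion step fill in the intermediate lattice points along the row of that $-1$ endpoint. Completeness still goes through because the projection of a true witness $F$ always lands on a vertex with $\Meps=-1$ whenever the column index changes (this is where Lemma~\ref{lemma:interiormistake} is used), so the stricter edge set still contains the projected walk.
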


\begin{proof}
We construct the approximate distance oracle $\dist$ using $O(N \log N / \alpha)$ time and space.
Given $P$ and $Q$, we construct the $\beta$-compressed path $P^\beta$ in $O(n)$ time.
We supply every point in $P \backslash P^\beta$ with a pointer to the point in $P^\beta$ that precedes it.
We construct the Voronoi diagram of $P$ in the graph $G$ in $O(N \log N)$ time.
Given $P^\beta$, we construct for every integer $j \in [m]$ the window $W_j$ in $O(\frac{\kappa}{\beta})$ time.
Specifically, for any point $q_j$ we obtain the point $p_x$ that is closest to $q_j$.  
If $d(p_x, q_j) \leq \rho$ then we obtain the point $p_{\pieps(i)}$ in $P^\beta$ that precedes $p_x$ in constant time through the pre-stored pointer and we set:
$W_j = [i - \lceil \frac{9\kappa}{\beta} \rceil, i + \lceil \frac{9 \kappa}{\beta} \rceil ] \times \{ j \}$.
Otherwise, we set $W_j$ to be an empty set.

The union of windows  ($A  = \cup_j W_j)$ is a grid in $[n'] \times [m]$ of at most $O(m \cdot \frac{ \kappa}{\beta})$ lattice points. For each $(a, j) \in A$ we query $\dist$ in $O(\frac{1}{\alpha})$ time to determine the value $\Meps[ \pieps(a), j]$ in $O(m\frac{\kappa }{\alpha \beta})$ total time.

Given this grid, we construct an undirected directed grid graph where there intuitively there is a vertical edge (between \emph{adjacent} grid vertices) whenever both perceived distances are at most $(1 + \alpha)(1 + \alpha + \beta) \rho$ and a horizontal or diagonal edge (between \emph{adjacent} grid vertices) whenever both perceived distances are at most $(1 + \alpha)(1 + \alpha + \beta) \rho$ and at least one perceived distance is at most $(1 + \alpha)\rho$:

\begin{itemize}
    \item a vertical edge between $(a, j)$ and  $(a, j + 1)$ if $\Meps[\pieps(a), j] < 1$ and $\Meps[\pieps(a), j + 1] < 1$, 
    \item a horizontal edge between $(a, j)$ and $(a+1, j)$ if $\Meps[\pieps(a), j] < 1$ and $\Meps[\pieps(a+1), j] = -1$,
    \item a horizontal edge between $(a, j)$ and $(a+1, j)$ if $\Meps[\pieps(a), j] = -1$ and $\Meps[\pieps(a+1), j] < 1$,
    \item diagonal edge between $(a, j)$ to $(b, j')$, with $b = a + 1$ and $j' \in \{ j-1, j+1 \}$ if: \newline $\Meps[\pieps(a), j] < 1$ and $\Meps[\pieps(b), j'] = -1$.
     \item diagonal edge between $(a, j)$ to $(b, j')$, with $b = a + 1$ and $j' \in \{ j-1, j+1 \}$ if: \newline $\Meps[\pieps(a), j] = -1$ and $\Meps[\pieps(b), j'] < 1$.
\end{itemize} 
\noindent
We can determine if there exists a path in $A$ from  $(1, 1)$ to $(n', m)$ in $O(\frac{m \kappa}{\beta})$ time.

\subparagraph{If such a path $F^*$ exists,} we claim that $\FD(P, Q) \leq (1 + \alpha)(1 + \alpha + \beta)\rho$. 
Indeed, we transform $F^*$ into a path over $[n] \times [m]$ as follows: for all  $(a, j) \in F^*$ we add $(\pieps(a), j)$.
Note that per construction of the grid graph, for all points in $F^*$ it must be that $\Meps[\pieps(a), j] < 1$ and thus $\d(\pieps(a), j) \leq (1 + \alpha)(1 + \alpha + \beta) \rho$.
For every two consecutive points $(a, j)$, $(b, j')$ with $b \neq a$, per construction, $\Meps[\pieps(a), j] = -1$ or $\Meps[\pieps(b), j'] = -1$.

In the first case, we add all points $(x, j)$ with $x \in [\pieps(a), \pieps(b)]$.
In the second case,  we add all points $(x, j')$ with $x \in [\pieps(a), \pieps(b)]$.
By Lemma~\ref{lemma:projectcompression}, for all these points it must be that $\Meps[x, j'] < 1$. 
Thus, we found a walk $F$ from $(1, 1)$ to $(n, m)$ where for every $(i, j) \in F$, $\Meps[i, j] < 1$ and the \frechet distance between $P$ and $Q$ is at most $(1 + \alpha)(1 + \alpha + \beta) \rho$. 

\subparagraph{If no such path $F^*$ exists,}  we claim that $\FD(P, Q) > \rho$. 
Suppose for the sake of contradiction that $\FD(P, Q) \leq \rho$ then there exists a discrete path $F$ from $(1, 1)$ to $(n, m)$ where for all $(i, j) \in F$, $d(p_i, q_j) \leq \rho$.
We use $F$ to construct a path $F^*$ from $(1, 1)$ to $(n', m)$ in our grid graph. 
Specifically, for every element $(x, j) \in F$ we check if $p_x$ has been removed during compression. 
\begin{enumerate}[(i)]
    \item If $p_x$ has an equivalent in $P^\beta$ then there exists an integer $a$ such that $p_{\pieps(a)} = p_x$ and we add the lattice point $(a, j) \in [n'] \times [m]$ to $F^*$.
    Per definition of $F$, $\Meps[\pieps(a), j] = -1$.
    \item 
    At this point (this argument differs from the argument in Theorem~\ref{thm:epsdecision}) we recursively consider preceding point in $F$, until we identify a point $p_{\pi(i)} \in F$.
    Since $F$ is a connected walk, it must be that $\pi(i)$ precedes (or succeeds)  $p_x$  in $P$.
    We add the point $(i, j) \in [n'] \times [m]$ to $F^*$. By Lemma~\ref{lemma:interiormistake}, $\Meps[ \pieps(i), j] < 1$.
\end{enumerate}
Since $F$ is a connected integer path from $(1, 1)$ to $(n, m)$, we obtain a connected integer path $F^*$ from $(1, 1)$ to $(n', m)$. 
Moreover whenever this path traverses a horizontal or diagonal edge to a point $(a, j)$ (this can only occur in step (i) because in step (ii) the $x$-coordinate in the grid $[n'] \times [m]$ never changes), it must be that $(\pieps(a), j) \in F$ and thus $\Meps[\pieps(a), j] = -1$. 
Thus, $F^*$ is a path from $(1, 1)$ to $(n', m)$ in our grid graph which contradicts the earlier assumption that no such path exists.
\end{proof}

\end{document}